\newtheorem{problem}{Problem}
\newenvironment{proofof}[1]
  {\trivlist\PRstyle\item[]{\bfseries Proof of #1:}\newline}{\QED\endtrivlist}
\renewcommand{\R}{\mathbb{R}}
\renewcommand{\C}{\mathbb{C}}
\newcommand{\GR}{\mathbb{Q}(\iu)}
\newcommand{\Q}{\mathbb{Q}}
\newcommand{\Z}{\mathbb{Z}}
\newcommand{\N}{\mathbb{N}}
\newcommand{\gltwoq}{{\rm GL}\ensuremath{(2,\mathbb{Q})}\xspace}
\newcommand{\slthreez}{{\rm SL}\ensuremath{(3,\mathbb{Z})}\xspace}
\newcommand{\heisc}{{\rm H}\ensuremath{(n,\mathbb{Q}(\iu))}\xspace}
\newcommand{\heiscomplex}{{\rm H}\ensuremath{(n,\mathbb{C})}\xspace}
\newcommand{\heisk}{{\rm H}\ensuremath{(n,\mathbb{K})}\xspace}
\renewcommand{\Re}{\operatorname{Re}}
\renewcommand{\Im}{\operatorname{Im}}
\newcommand{\veca}{\bm{a}}
\newcommand{\vecb}{\bm{b}}
\newcommand{\vecz}{\bm{0}}
\newcommand{\vect}{\textrm{vec}} 
\newcommand{\iu}{\mathrm{i}}
\DeclareMathOperator{\Arg}{arg}
\newcommand{\shuffle}{\mathsf{shuffle}}
\newcommand{\gammaeq}{\stackrel{\mathclap{\Arg}}{=}}
\newcommand{\bx}{\bm{x}}
\begin{document}

\title{On the Identity and Group Problems for Complex Heisenberg Matrices}

\address{R. Niskanen: r.niskanen{@}ljmu.ac.uk}

\author{Paul C. Bell\\
School of Computer Science and Mathematics, Keele University, UK \\
p.c.bell{@}keele.ac.uk
\and Reino Niskanen \\
School of Computer Science and Mathematics, Liverpool John Moores University, UK\\
r.niskanen{@}ljmu.ac.uk
\and Igor Potapov \\
Department of Computer Science, University of Liverpool, UK\\
potapov{@}liverpool.ac.uk
\and Pavel Semukhin \\
School of Computer Science and Mathematics, Liverpool John Moores University, UK\\
p.semukhin{@}ljmu.ac.uk
}

\maketitle

\runninghead{P. Bell, et al.}{On the Identity and Group Problems for Complex Heisenberg Matrices}

\begin{abstract}
We study the  Identity Problem, the problem of determining if a finitely generated semigroup  of matrices contains the identity matrix; see
Problem 3 (Chapter 10.3) in ``Unsolved Problems in Mathematical Systems and Control Theory'' by Blondel and Megretski (2004). This fundamental problem is  known to be
undecidable for  $\mathbb{Z}^{4 \times 4}$ and decidable for  $\mathbb{Z}^{2 \times 2}$. 
The Identity Problem has been recently shown to be in polynomial time by Dong for the Heisenberg group over complex numbers in any fixed dimension with the use of Lie algebra and the Baker-Campbell-Hausdorff formula.
We develop alternative proof techniques for the problem making a step forward towards more general problems such as the Membership Problem.
Using our techniques we also show that the problem of determining if a given set of Heisenberg matrices generates a group can be decided in polynomial time.
\end{abstract}

\begin{keywords}
identity problem, matrix semigroup, decidability
\end{keywords}

\section{Introduction}
Matrices and matrix products can represent dynamics in many systems, from computational applications in linear algebra and engineering to natural science applications in quantum mechanics, population dynamics and statistics, among others \cite{BM04,BN16,COW13,COW16,ding2015,GOW15,KLZ16,MT17,OPW15}.
The analysis of various evolving systems requires solutions of reachability questions in linear systems, which form the essential part of verification procedures, control theory questions, biological systems predictability, security analysis etc.

Reachability problems for matrix products are  challenging due to the complexity of this mathematical object and a lack of effective algorithmic techniques.   The significant challenge in the analysis of matrix semigroups was initially illustrated by Markov (1947) \cite{Markov47} and later highlighted by  Paterson (1970) \cite{Paterson70}, Blondel and Megretski (2004) \cite{BM04}, and Harju (2009) \cite{Harju09}. The central reachability question is  the \emph{Membership Problem}.
\begin{problem}[Membership Problem]\label{prob:membership}
Let $S$ be a matrix semigroup generated by a finite set of
$n{\times}n$ matrices over a ring $R$, where $R=\Z,\mathbb{Q},\mathbb{A},\mathbb{Q}(\iu)$, etc. 
Let \(M\) be an \(n\)-by-\(n\) matrix over the same ring. Is $M$ in the semigroup, i.e., does $M\in S$ hold?
\end{problem}
By restricting~$M$ to be the identity matrix, the problem is known as  the {\em Identity Problem}. 
\begin{problem}[Identity Problem]\label{idProb}
Let $S$
be a matrix semigroup generated by a finite set of
$n{\times}n$
matrices over a ring $R$, where $R=\Z,\mathbb{Q},\mathbb{A},\mathbb{Q}(\iu)$, etc. Is the identity matrix $\bm{I}$ in the semigroup, i.e., does $\bm{I}\in S$ hold?
\end{problem}

The Membership Problem is known to be undecidable for integer matrices starting from dimension three, but the 
decidability status of the Identity Problem was unknown for a long time for matrix semigroups of any dimension;
see Problem 10.3 in ``Unsolved Problems in Mathematical Systems and Control Theory'' \cite{BM04}. The Identity Problem was shown to be undecidable for $48$ matrices from $\mathbb{Z}^{4 \times 4}$ in \cite{BP10} and  for a generating set of eight matrices in \cite{KNP18}. This implies that the \emph{Group Problem} (decide whether a finitely generated semigroup is a group) is also undecidable \cite{BP10}. The Identity Problem and the Group Problem are open for $\mathbb{Z}^{3 \times 3}$.

The Identity Problem for a semigroup generated by $2 \times 2$ 
 matrices was shown to be  $\EXPSPACE$ decidable in
 \cite{CK05} and later improved by showing to be $\NP$-complete in  \cite{BHP17}. The only decidability results beyond integer $2\times2$ matrices were shown in \cite{DPS20}  
 for flat rational subsets of \gltwoq .

Similarly to \cite{CHK99}, the work  \cite{KNP18} initiated  consideration of matrix decision problems  in  the Special Linear Group \slthreez, by showing that
there is no embedding from pairs of words into matrices from \slthreez.
The authors also proved that the Identity Problem is decidable for the discrete Heisenberg group ${\rm H}(3,\mathbb{Z})$ which is a subgroup of \slthreez. 

The Heisenberg group is widely used in mathematics and physics.
This is in some sense the simplest non-commutative group, and has close
connections to  quantum mechanical systems~\cite{Brylinski93,GU14,Kostant70}, harmonic analysis, and  number theory \cite{Persi2017,Persi2021}. It also  makes appearances in complexity theory, e.g., the analysis and geometry of the Heisenberg group have been used to disprove the Goemans-Linial conjecture in complexity theory \cite{FOCS2006}.
Matrices in physics and engineering are ordinarily defined with values over \(\mathbb{R}\) or \(\mathbb{C}\).
In this context, we formulate our decision problems and  algorithmic solutions over the field of complex numbers with a finite representation, namely, the Gaussian rationals  \(\GR\).

There has been an increased interest in the Identity Problem for different algebraic objects that encompass the Heisenberg group.
The Heisenberg group has nilpotency class two.
Dong~\cite{Dong24soda} investigated the Identity Problem in larger nilpotent subgroups, proving that if the nilpotency class is at most ten then the Identity Problem is decidable in polynomial time.
Independently, Shafrir~\cite{Shafrir24} showed that the Identity Problem is decidable for all nilpotent subgroups.
Dong~\cite{Dong24stoc} also studied another group property that Heisenberg groups (and all groups of nilpotency class at most three) satisfy.
Namely, they showed that the Identity Problem is decidable for metabelian groups. This result was subsequently extended in Bodart and Dong \cite{BodartDong24}, where it was shown that the Identity Problem is decidable in virtually solvable matrix groups over the field of algebraic numbers. Moreover, the Identity Problem was shown to be \(\NP\)-complete in the special affine group of \(\Z^2\); see Dong \cite{DongLICS23}.

In another resent paper \cite{Dong23}, Dong showed that the Identity Problem is decidable in polynomial time for complex Heisenberg matrices.
They first prove the result for upper-triangular matrices with rational entries and ones on the main diagonal, \(\mathrm{UT}(\Q)\), and then use a known embedding of the Heisenberg group over algebraic numbers into \(\mathrm{UT}(\Q)\). Their approach is different from our techniques; the main difference being that \cite{Dong23} uses tools from Lie algebra and, in particular, matrix logarithms and the Baker-Campbell-Hausdorff formula, to reason about matrix products and their properties. In contrast, our approach first characterises matrices which are `close to' the identity matrix, which we denote \(\Omega\)-matrices. Such matrices are close to the identity matrix in that they differ only in a single position in the top-right corner. We then argue about the commutator angle of matrices within this set in order to determine whether zero can be reached, in which case the identity matrix is reachable. We believe that these techniques take a step towards proving the decidability of the more general \emph{Membership Problem}, which we discuss at the end of the paper. A careful analysis then follows to ensure that all steps require only polynomial time, and we extend our techniques to show that determining if a given set of matrices forms a group (the \emph{Group Problem}) is also decidable in \(\P\) (this result is shown in \cite{Dong24soda} using different techniques). We thus present polynomial time algorithms for both these problems for Heisenberg matrices over $\mathbb{Q}$(i) in any dimension $n$. 

 These new techniques allow us to extend previous results for the discrete Heisenberg group H($n,\mathbb{Z}$) and H($n,\mathbb{Q}$) 
\cite{KNP18,COS+19,KLZ16,Dong22} and make a step forward towards proving the decidability of the Membership Problem for complex Heisenberg matrices.
This paper is an extended version of the conference paper \cite{BNP+23} containing full proofs of the results and extended discussion of the proof techniques.
\section{Roadmap}

We will give a brief overview of our approach here. Given a Heisenberg matrix \(M=\begin{psmallmatrix}
1&\bm{m}_1^T&m_3\\\bm{0}&\bm{I}_{n-2}&\bm{m}_2\\0&\bm{0}^T&1
\end{psmallmatrix} \in \heisc\), denote by $\psi(M)$ the triple~$(\bm{m}_1,\bm{m}_2,m_3) \in \GR^{2n-3}$. We
define the set \(\Omega\subseteq\heisc\) as those matrices where \(\bm{m}_1\) and \(\bm{m}_2\) are zero vectors, i.e., matrices in \(\Omega\) look like \(\bm{I}_n\) except allowing any element of \(\GR\) in the top right element. Such matrices play a crucial role in our analysis.

In particular, given a set of matrices \(G = \{G_1, \ldots, G_t\} \subseteq \heisc\) generating a semigroup \(\langle G \rangle\), we can find a description of \(\Omega_{\langle G \rangle} = \langle G \rangle \cap \Omega\). Since \(\bm{I} \in \Omega\), the Identity Problem reduces to determining if \(\bm{I} \in \Omega_{\langle G \rangle}\).

Several difficulties present themselves, particularly if we wish to solve the problem in polynomial time (\(\P\)). The set \(\Omega_{\langle G \rangle}\) is described by a linear set \(\mathcal{S} \subseteq \mathbb{N}^t\), which is the solution set of a homogeneous system of linear Diophantine equations induced by matrices in \(G\).
This is due to the observation that the elements \((\bm{m}_1,\bm{m}_2) \in \GR^{2n-4}\) behave in an additive fashion under multiplication of Heisenberg matrices. The main issue is that the size of the basis of \(\mathcal{S}\) is exponential in the description size of \(G\). Nevertheless, we can determine \emph{if a solution exists} to such a system in \(\P\) (Lemma~\ref{Ptime}), and this proves sufficient. 

The second issue is that reasoning about the element \(m_3 \in \GR\) (i.e., the top right element) in a product of Heisenberg matrices is much more involved than for elements \((\bm{m}_1,\bm{m}_2) \in \GR^{2n-4}\). Techniques to determine if \(m_3 = 0\) for an \(\Omega\)-matrix within \(\Omega_{\langle G \rangle}\) take up the bulk of this paper. 

The key to our approach is to consider \emph{commutators} of pairs of matrices within \(G\), which in our case can be described by a single complex number.
We denote the commutator of matrices \(M_1\), \(M_2\) by \([M_1,M_2]\); see the following section for the definition of the commutator.
After 
removing all \emph{redundant matrices} (those never reaching an \(\Omega\)-matrix), we have two cases to consider. Either every pair of matrices from \(G\) has the same \emph{angle} of the commutator or else there are at least two commutators with different angles.

The latter case is used in Lemma~\ref{lem:notallsamegamma}.
It states that the identity matrix can always be constructed using a solution that contains four particular matrices.
Let \(M_{1}\), \(M_{2}\), \(M_{3}\) and \(M_{4}\) be such that \([M_{1},M_{2}]=r\exp(\iu \gamma)\) and \([M_{3},M_{4}]=r'\exp(\iu \gamma')\), where \(\gamma\neq \gamma'\) so that pairs \(M_{1}, M_{2}\) and \(M_{3}, M_{4}\) have different commutator angles.
We may then define four matrix products using the same generators but matrices \(M_1\), \(M_2\), \(M_3\) and \(M_4\) are in a different order. 
This difference in order and the commutator angles being different, ensures that we can control the top right corner elements in order to construct the identity matrix.
Lemma~\ref{lem:purewithcommutators} provides details on how to calculate the top right element in these products.
We then prove that these top right elements in the four matrices are not contained in an open half-plane and this is sufficient for us to construct the identity matrix.

The above construction does not work when all commutators have the same angle, and indeed in this case the identity matrix may or may not be present.
Hence, we need to consider various possible shuffles of matrices in these products.
To this end, we extend the result of Lemma~\ref{lem:purewithcommutators} to derive a formula for the top right element for any shuffle and prove it as Lemma~\ref{lem:anyshuffle}.
We observe that there is a \emph{shuffle invariant} part of the product that does not depend on the shuffle, and that shuffles add or subtract commutators.
Furthermore, this shuffle invariant component can be calculated from the generators used in the product.
As we assume that all commutators have the same angle, \(\gamma\), different shuffles move the value along the line in the complex plane defined by the common commutator angle which we call the \emph{\(\gamma\)-line}.

It is straightforward to see that if it is not possible to reach the \(\gamma\)-line using the additive semigroup of shuffle invariants, then the identity matrix cannot be generated.
Indeed, since different shuffles move the value along the \(\gamma\)-line but the shuffle invariant part never reaches it, then the possible values are never on the \(\gamma\)-line, which includes the origin.

We show that if it \emph{is} possible to reach the \(\gamma\)-line using shuffle invariants and there are at least two non-commuting matrices in the used solution, then the identity matrix is in the semigroup (Lemma~\ref{lem:samegamma}). Testing this property requires determining the solvability of a polynomially-sized set of non-homogeneous systems of linear Diophantine equations, which can be done in polynomial time by Lemma~\ref{Ptime}.

Finally, if the \(\gamma\)-line can be reached only with commuting matrices, then we can decide if the identity matrix is in the semigroup by solving another system of linear Diophantine equations. This system is constructed using an explicit formula for the top right element in terms of the Parikh vector of the generators (see Lemma~\ref{lem:samegamma}).

\section{Preliminaries}
The sets of rational, real and complex numbers are denoted by $\Q$, \(\R\) and \(\C\), respectively.
The set of rational complex numbers is denoted by $\GR=\{a+b\iu\mid a,b\in \Q\}$.
The set \(\GR\) is often called the Gaussian rationals in the literature.
A complex number can be written in polar form  $a+b\iu=r\exp(\iu\varphi)$, where $r\in\R$ and $\varphi\in [0,\pi)$.
We denote the \emph{angle} of the polar form \(\varphi\) by \(\Arg(a+b\iu)\). 
We also denote \(\Re(a+b\iu)=a\) and \(\Im(a+b\iu)=b\).
It is worth highlighting that commonly the polar form is defined for a positive real \(r\) and an angle in \([0,2\pi)\).
These two forms are equivalent in the sense that one can be easily transformed into the other.

The \emph{identity matrix} is denoted by $\bm{I}_n$ or, if the dimension \(n\) is clear from the context, by \(\bm{I}\).
Let \(\mathbb{K}\) be any field. The \emph{Heisenberg group} \heisk over \(\mathbb{K}\) is formed by $n \times n$ matrices of the form 
$M  = 
\begin{psmallmatrix}
1 & \bm{m}_1^T & m_3\\
\bm{0} & \bm{I}_{n-2} & \bm{m_2}\\
0 & \bm{0}^T & 1
\end{psmallmatrix}$, 
where \(\bm{m}_1,\bm{m}_2 \in \mathbb{K}^{n-2}\), \(m_3 \in \mathbb{K}\) and \(\bm{0} = (0, 0, \ldots, 0)^T \in \mathbb{K}^{n-2}\) is the zero vector.
Note that the Heisenberg group is a non-commutative subgroup of ${\rm SL}(n,\mathbb{K})=\{M\in \mathbb{K}^{n\times n}\mid \det(M)=1\}$. 

We will be interested in subsemigroups of \heisc which are finitely generated. Given a set of matrices \(G = \{G_1, \ldots, G_t\} \subseteq \heisc\), we denote the matrix semigroup  generated by \(G\) as \(\langle G \rangle\).

Let \(M=\begin{psmallmatrix}
1&\bm{m}_1^T&m_3\\\bm{0}&\bm{I}_{n-2}&\bm{m}_2\\0&\bm{0}^T&1
\end{psmallmatrix}\), then \((M)_{1,n} = m_3\) is the top right element.
To improve readability, by $\psi(M)$ we denote the triple~$(\bm{m}_1,\bm{m}_2,m_3) \in \GR^{2n-3}$. 

The vectors \(\bm{m}_1,\bm{m}_2\) play a crucial role in our considerations.
We define the set \(\Omega\subseteq\heisc\) as those matrices where \(\bm{m}_1\) and \(\bm{m}_2\) are zero vectors, i.e., matrices in \(\Omega\) look like \(\bm{I}_n\) except allowing any element of \(\GR\) in the top right element.
That is,
\(\Omega=\left\{\begin{psmallmatrix}
1&\bm{0}^T&m_3\\\bm{0}&\bm{I}_{n-2}&\bm{0}\\0&\bm{0}^T&1
\end{psmallmatrix} \mid m_3\in \GR \right\}\),
where \(\bm{0} = (0, 0, \ldots, 0)^T \in \GR^{n-2}\) is the zero vector.

Let us define a shuffling of a product of matrices.\footnote{In this paper, the term ``product of matrices \(M_{1},M_2,\ldots, M_{k}\)'' can either mean the expression \(M_{1}M_{2}\cdots M_{k}\) or the resulting matrix \(M=M_{1}M_{2}\cdots M_{k}\). It should be clear from the context which meaning is used.} Let \(M_{1},M_2,\ldots, M_{k} \in \langle G\rangle\). The set of permutations of a product of these matrices is denoted by 
\[\shuffle(M_{1},M_2,\ldots, M_{k})=\{M_{\sigma(1)}M_{\sigma(2)}\cdots M_{\sigma(k)}\mid \sigma \in \mathcal{S}_k\},\]
where \(\mathcal{S}_k\) is the set of permutations on \(k\) elements.
If some matrix appears multiple times in the list, say \(M_1\) appears \(x\) times, we write 
\(\shuffle(M_1^x,M_2,\ldots,M_k)\) instead of \(\shuffle(\underbrace{M_1,\ldots,M_1}_{x \text{ times}},M_2,\ldots,M_k)\).
Furthermore, for a matrix \(M=M_{1}M_{2}\cdots M_{k}\in\langle G\rangle\), we define
\[
\shuffle(M)=\shuffle(M_{1}M_{2}\cdots M_{k}) \text{ to be equal to }  \shuffle(M_{1},M_{2},\ldots, M_{k}).    
\]

Let \(M_1=\begin{psmallmatrix}1&\veca_1^T&c_1\\\bm{0}&\bm{I}_{n-2}&\vecb_1\\0&\bm{0}^T&1\end{psmallmatrix}\) and \(M_2=\begin{psmallmatrix}1&\veca_2^T&c_2\\\bm{0}&\bm{I}_{n-2}&\vecb_2\\0&\bm{0}^T&1\end{psmallmatrix}\).
By an abuse of notation, we define the \emph{commutator} \([M_1,M_2]\) of \(M_1\) and \(M_2\) by \([M_1,M_2]=\veca_1^T\vecb_2-\veca_2^T\vecb_1\in\GR\). Note that the commutator of two arbitrary matrices $A, B$ is ordinarily defined as \([A, B] = AB - BA\), i.e., a matrix. However, for matrices \(M_1, M_2 \in \heisc\), it is clear that \(M_1M_2-M_2M_1 = \begin{psmallmatrix}0&\vecz^T&\veca_1^T\vecb_2-\veca_2^T\vecb_1\\\bm{0}&O&\vecz\\0&\bm{0}^T&0\end{psmallmatrix}\), where \(O\) is the \((n-2)\times (n-2)\) zero matrix,
thus justifying our notation which will be used extensively. Observe that the matrices \(M_1, M_2\) commute if and only if \([M_1,M_2]=0\).

Note that the commutator is antisymmetric, i.e., \([M_1,M_2]=-[M_2,M_1]\).
We further say that \(\gamma\) is the \emph{angle of the commutator} if \([M_1,M_2]=r\exp(\iu\gamma)\) for some \(r\in\R\) and \(\gamma\in[0,\pi)\).
If two commutators \([M_1,M_2]\), \([M_3,M_4]\) have the same angles, that is, \([M_1,M_2]=r\exp(\iu\gamma)\) and \([M_3,M_4]=r'\exp(\iu\gamma)\) for some \(r,r'\in\R\), then we denote this property by \([M_1,M_2]\gammaeq[M_3,M_4]\). If they have different angles, then we write \([M_1,M_2]\not\gammaeq[M_3,M_4]\).
 By convention, if \([M_1,M_2]=0\), then \([M_1,M_2]\gammaeq[M_3,M_4]\) for every \(M_3, M_4 \in \heisc\).

To show that our algorithms run in polynomial time, we will need the following lemma.

\begin{lemma}\label{Ptime}
The following problems can be solved in polynomial time:
    \begin{enumerate}[(i)]
        \item Let \(A\in \Q^{n\times m}\) be a rational matrix, and \(\bm{b}\in \Q^n\) be an \(n\)-dimensional rational vector with non-negative coefficients. Decide whether the system of inequalities \(A\bm{x} \geq \bm{b}\) has an integer solution \(\bm{x}\in \Z^m\).

        \item Let \(A_1\in \Q^{n_1\times m}\) and \(A_2\in \Q^{n_2\times m}\) be a rational matrices. Decide whether the system of inequalities \(A_1\bm{x} \geq \bm{0}^{n_1}\) and \(A_2\bm{x} > \bm{0}^{n_2}\) has an integer solution \(\bm{x}\in \Z^m\).
    \end{enumerate}
\end{lemma}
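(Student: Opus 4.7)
The plan is to reduce both parts to the polynomial-time decidability of rational linear feasibility (Khachiyan's algorithm for linear programming over $\mathbb{Q}$), and then pass from a rational to an integer solution by an elementary scaling argument that relies crucially on the non-negativity of the right-hand side.

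For part (i), first use Khachiyan's algorithm to decide in polynomial time whether the system $A\bm{x} \geq \bm{b}$ has a rational solution $\bm{x}^* \in \mathbb{Q}^m$, noting that such an algorithm returns, when it exists, a solution of encoding length polynomial in the input. If no rational solution exists, there is no integer solution either. Otherwise, let $d \in \mathbb{Z}_{>0}$ be a common multiple of the denominators of the entries of $\bm{x}^*$ (so $d \geq 1$ and $d\bm{x}^* \in \mathbb{Z}^m$); then
\begin{equation*}
A(d\bm{x}^*) \;=\; d(A\bm{x}^*) \;\geq\; d\bm{b} \;\geq\; \bm{b},
\end{equation*}
where the last inequality uses $d \geq 1$ together with $\bm{b} \geq \bm{0}$. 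Hence $d\bm{x}^*$ is the sought integer solution, and its encoding length is polynomial in the input size.

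For part (ii), observe that the mixed system $A_1\bm{x} \geq \bm{0}^{n_1}$, $A_2\bm{x} > \bm{0}^{n_2}$ is equisolvable (over $\mathbb{Q}$, and hence over $\mathbb{Z}$ by the same scaling trick as in (i)) with the weak system $A_1\bm{x} \geq \bm{0}^{n_1}$, $A_2\bm{x} \geq \bm{1}^{n_2}$: any solution of the weak version satisfies the strict one, while a rational solution $\bm{x}^*$ of the strict version can be multiplied by a sufficiently large positive rational to force every coordinate of $A_2\bm{x}^*$ to be at least $1$. The weak version is then an instance of the setting of part (i), with right-hand side $\bigl(\bm{0}^{n_1},\,\bm{1}^{n_2}\bigr)$, which is componentwise non-negative, so part (i) applies directly.

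The main conceptual obstacle is that integer linear programming is $\NP$-hard in full generality, so part (i) cannot go through a general ILP solver; the argument must exploit the hypothesis $\bm{b} \geq \bm{0}$, which is exactly what allows the clearing-denominators step $\bm{x}^* \mapsto d\bm{x}^*$ to preserve feasibility. The remaining care concerns bit-complexity bookkeeping: checking that the rational solution returned by the LP solver, and hence the scaled integer solution, has size polynomial in the input, which is standard.
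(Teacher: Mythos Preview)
Your proposal is correct and follows essentially the same approach as the paper: for part (i) you reduce integer feasibility to rational feasibility via the scaling $\bm{x}^*\mapsto d\bm{x}^*$ (which is exactly the paper's argument), and for part (ii) you reduce to part (i) by replacing the strict inequality $A_2\bm{x}>\bm{0}$ with $A_2\bm{x}\geq \bm{1}$. The only cosmetic difference is that the paper justifies this last replacement by first clearing denominators in $A_1,A_2$ (using homogeneity) so that $A_2\bm{x}$ is an integer vector for integer $\bm{x}$, whereas you argue equisolvability over $\mathbb{Q}$ by scaling the solution; both routes are equally short and lead to the same reduction.
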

\begin{proof} 
Let us prove both cases individually.

    \emph{(i)} We will show that the system \(A\bm{x} \geq \bm{b}\) has an integer solution \(\bm{x}\in \Z^m\) if and only if it has a rational solution \(\bm{x}\in \Q^m\). One direction is obvious. So, suppose there is a rational vector \(\bm{x}\in \Q^m\) such that \(A\bm{x} \geq \bm{b}\). Let \(r\geq 1\) be the least common multiple of the denominators of all the coefficients of \(\bm{x}\). Let \(\bm{x'} = r\bm{x}\in \Z^m\). Hence we have
    \[
    A\bm{x'} = A(r\bm{x}) = r(A\bm{x}) \geq r\bm{b} \geq \bm{b},
    \]
    where the last two inequalities hold because \(\bm{b}\) has non-negative coefficients and \(r\geq 1\).
    To finish the proof, note that we can decide in polynomial time whether a system of linear inequalities has a rational solution using linear programming; see \cite{Schrijver98}.

    \emph{(ii)} Since the system of inequalities is homogeneous, we can assume without loss of generality that matrices \(A_1\), \(A_2\) have integer coefficients. Hence the condition \(A_2\bm{x} > \bm{0}^{n_2}\) is equivalent to \(A_2\bm{x} \geq \bm{1}^{n_2}\) for any integer vector  \(\bm{x}\in \Z^m\), where \(\bm{1}^{n_2}\) is a vector of dimension \(n_2\) with coordinates \(1\). By the first part, we can decide in polynomial time whether the system of inequalities \(A_1\bm{x} \geq \bm{0}^{n_1}\) and \(A_2\bm{x} \geq \bm{1}^{n_2}\) has an integer solution \(\bm{x}\in \Z^m\). 
\end{proof}

\section{Properties of \(\Omega\)-matrices}\label{SecShufInv}

To solve the Identity Problem for subsemigroups of \(\heisc\) (Problem~\ref{idProb}), we will be analysing matrices in \(\Omega\) (matrices with all zero elements, except possibly the top-right corner value). 
Let us first discuss how to construct \(\Omega\)-matrices from a given set of generators \(G \subseteq \heisc\).

As observed earlier, when multiplying Heisenberg matrices of the form \(\begin{psmallmatrix}
1&\bm{m}_1^T&m_3\\\bm{0}&\bm{I}_{n-2}&\bm{m}_2\\0&\bm{0}^T&1
\end{psmallmatrix}\),
elements \(\bm{m}_1\) and \(\bm{m}_2\) are \emph{additive}.
We can thus construct a homogeneous system of linear Diophantine equations (SLDEs) induced by matrices in \(G\).
Each \(\Omega\)-matrix in \(\Omega_{\langle G \rangle}\) then corresponds to a solution to this system.

Let \(G=\{G_1,\ldots,G_t\}\), where \(\psi(G_i)=(\bm{a}_i,\bm{b}_i,c_i)\). For a vector \(\veca \in \GR^{n-2}\), define a column vector \(\Re(\veca) = (\Re(\veca(1)), \ldots, \Re(\veca(n-2)))^T\) (and similarly for \(\Im(\veca)\)).
We consider a system \(A\bx=\bm{0}\), where
\begin{equation}\label{eq:matA}
A=\begin{pmatrix}
\Re(\veca_1) & \Re(\veca_2) & \cdots & \Re(\veca_t) \\
\Im(\veca_1) & \Im(\veca_2) & \cdots & \Im(\veca_t) \\
\Re(\vecb_1) & \Re(\vecb_2) & \cdots & \Re(\vecb_t) \\
\Im(\vecb_1) & \Im(\vecb_2) & \cdots & \Im(\vecb_t)
\end{pmatrix},
\end{equation}
\(\bx\in\N^t\) and \(\bm{0}\) is the \(4(n-2)\)-dimensional zero vector; noting that \(A \in \mathbb{Q}^{4(n-2) \times t}\).
Let \(\mathcal{S}=\{\bm{s}_1,\ldots,\bm{s}_p\}\) be the set of \emph{minimal} or \emph{irreducible} solutions to the system,
that is, the solutions that cannot be written as a sum of two nonzero solutions.
The set \(\mathcal{S}\) is always finite and constructable  \cite{Schrijver98}.

A matrix \(G_i\in G\) is \emph{redundant} if the \(i\)th component is 0 in every minimal solution \(\bm{s}\in\mathcal{S}\).
Non-redundant matrices can be recognized by checking whether a non-homogeneous SLDE has a solution. 
More precisely, to check whether \(G_i\) is non-redundant, we consider the system \(A\bx=\bm{0}\) together with the constraint that \(\bx(i)\geq1\), where \(\bx(i)\) is the \(i\)th component of \(\bx\). Using Lemma~\ref{Ptime}, we can determine in polynomial time whether such a system has an integer solution.

For the remainder of the paper, we assume that \(G\) is the set of non-redundant matrices because redundant matrices cannot be used to generate an \(\Omega\)-matrix, and hence to generate \(\bm{I}\). 
This implicitly assumes that for this \(G\), the set \(\mathcal{S}\neq\emptyset\).
Indeed, if \(\mathcal{S} = \emptyset\), then all matrices are redundant and \(G\) must be the empty set. Hence \(\bm{I}\not\in \langle G\rangle\) holds trivially in this case.

Let \(M_{1},\ldots,M_{k}\in G\) be such that \(X = M_{1} \cdots  M_{k} \in \Omega\).
The Parikh vector of occurrences of each matrix from \(G\) in product \(X\) may be written as \(\bm{x} = (m_1, \ldots, m_{t}) \in \mathbb{N}^{t}\).
This Parikh vector \(\bm{x}\) is a linear combination of elements of \(\mathcal{S}\), i.e., 
\(
\bm{x} = \sum_{j=1}^p y_j \bm{s}_j
\), with \(y_j \in \mathbb{N}\), because  \(\bm{x}\) is a solution to the SLDEs.
Each element of \(\shuffle(M_1, \ldots, M_k)\) has the same Parikh vector, but their product is not necessarily the same matrix; potentially differing in the top right element.

Let us state some properties of \(\Omega\)-matrices.
\begin{lemma}\label{omegaPropLem}
The \(\Omega\)-matrices are closed under matrix product; the top right element is additive under the product of two matrices; and \(\Omega\)-matrices commute with Heisenberg matrices.
In other words, let \(A, B \in \Omega\) and \(M\in\heisc\), then 
\begin{enumerate}[(i)]
\item \(AB \in \Omega\);
\item \((AB)_{1,n} = A_{1,n} + B_{1,n}\);
\item \(AM=MA\).
\end{enumerate}
Furthermore, if \(N=M_1M_2\cdots M_{k-1}M_k \in \Omega\) for some \(M_1,\ldots,M_k\in\heisc\), then every cyclic permutation of matrices results in \(N\).
That is,
\[N=M_2M_3\cdots M_kM_1=\cdots= M_kM_1\cdots M_{k-2}M_{k-1}.\]
\end{lemma}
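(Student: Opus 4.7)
The plan is to handle the three bulleted claims by direct block-matrix computation, and then reduce the cyclic permutation claim to a single observation about how two Heisenberg factors behave when their product lands in \(\Omega\).

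For (i) and (ii), I would simply write \(A = \begin{psmallmatrix}1 & \bm{0}^T & a\\ \bm{0} & \bm{I}_{n-2} & \bm{0}\\ 0 & \bm{0}^T & 1\end{psmallmatrix}\) and \(B = \begin{psmallmatrix}1 & \bm{0}^T & b\\ \bm{0} & \bm{I}_{n-2} & \bm{0}\\ 0 & \bm{0}^T & 1\end{psmallmatrix}\) and multiply. Every cross term of the form \(\bm{0}^T\bm{I}_{n-2}\), \(\bm{I}_{n-2}\bm{0}\), or \(\bm{0}^T\bm{0}\) vanishes, and all that survives in the top-right corner is \(a+b\). This immediately gives \(AB\in\Omega\) and \((AB)_{1,n}=A_{1,n}+B_{1,n}\). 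For (iii), the same trick works: writing \(M\) in general Heisenberg form with parameters \((\bm{m}_1,\bm{m}_2,m_3)\), both \(AM\) and \(MA\) reduce to \(\begin{psmallmatrix}1 & \bm{m}_1^T & m_3+a\\ \bm{0} & \bm{I}_{n-2} & \bm{m}_2\\ 0 & \bm{0}^T & 1\end{psmallmatrix}\), because the would-be asymmetric terms are of the form \(\bm{0}^T\bm{m}_2\) or \(\bm{m}_1^T\bm{0}\).

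For the cyclic permutation statement, the key observation I would isolate is the following: if \(P,Q\in\heisc\) and \(PQ\in\Omega\), then \(P\) and \(Q\) commute and \(QP=PQ\). To see this, write \(\psi(P)=(\bm{a}_1,\bm{b}_1,c_1)\) and \(\psi(Q)=(\bm{a}_2,\bm{b}_2,c_2)\) and use the general Heisenberg product formula; \(PQ\in\Omega\) forces \(\bm{a}_1+\bm{a}_2=\bm{0}\) and \(\bm{b}_1+\bm{b}_2=\bm{0}\), so the commutator \([P,Q]=\bm{a}_1^T\bm{b}_2-\bm{a}_2^T\bm{b}_1=-\bm{a}_1^T\bm{b}_1+\bm{a}_1^T\bm{b}_1=0\). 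Applying this with \(P=M_1\) and \(Q=M_2\cdots M_k\) yields \(M_2\cdots M_kM_1 = M_1M_2\cdots M_k = N\), and iterating \(k-1\) times walks through every cyclic shift.

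I expect the only conceptually non-routine step to be the cyclic permutation argument, because it requires recognising that the \(\Omega\) condition is strong enough to force commutativity of the chosen two-block split, rather than merely equality of Parikh data. Once that observation is made, the rest is iterated application of the two-factor case; no inductive gymnastics beyond the obvious one on \(k\) are required. The other three parts are pure block-matrix bookkeeping and should be written as a short unified calculation.
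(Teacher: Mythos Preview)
Your proof is correct. Parts (i)--(iii) match the paper's approach exactly: direct block-matrix multiplication. For the cyclic permutation claim, your argument is genuinely different from the paper's and, in fact, cleaner. The paper computes the top-right entries of \(M_1M_2\cdots M_k\) and \(M_2M_3\cdots M_kM_1\) explicitly as sums \(\sum_{1\leq i<j\leq k}\veca_i^T\vecb_j\) versus \(\sum_{2\leq i<j\leq k}\veca_i^T\vecb_j + \sum_{i=2}^k\veca_i^T\vecb_1\), and then uses the relations \(\sum\veca_i=\sum\vecb_i=\bm{0}\) to show these sums coincide. Your approach instead isolates the two-factor lemma ``\(PQ\in\Omega\) implies \([P,Q]=0\)'' and reduces the \(k\)-factor claim to iterated application. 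This buys conceptual clarity and avoids index manipulation entirely; the paper's computation, on the other hand, yields the explicit formula for the top-right entry as a byproduct, which is reused in the subsequent Lemma~\ref{lem:purewithcommutators}.
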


\begin{proof}
The first three claims follow from the definition of \(\Omega\)-matrices.
Let us present the proofs for the sake of completeness.
Let \(A=\begin{psmallmatrix}1&\bm{0}^T&a_3\\\bm{0}&\bm{I}_{n-2}&\bm{0}\\0&\bm{0}^T&1\end{psmallmatrix}\) and \(B=\begin{psmallmatrix}1&\bm{0}^T&b_3\\\bm{0}&\bm{I}_{n-2}&\bm{0}\\0&\bm{0}^T&1\end{psmallmatrix}\).
Now, \(AB=\begin{psmallmatrix}1&\bm{0}^T&a_3+b_3\\\bm{0}&\bm{I}_{n-2}&\bm{0}\\0&\bm{0}&1\end{psmallmatrix}\) as was claimed in (i) and (ii).
Let \(M=\begin{psmallmatrix}1&\bm{m}_1^T&m_3\\\bm{0}&\bm{I}_{n-2}&\bm{m}_2\\0&\bm{0}^T&1\end{psmallmatrix}\); then we have
\begin{align*}
    AM=\begin{psmallmatrix}1&\bm{m}_1^T&a_3+m_3\\\bm{0}&\bm{I}_{n-2}&\bm{m}_2\\0&\bm{0}^T&1\end{psmallmatrix} \text{ and } MA=\begin{psmallmatrix}1&\bm{m}_1^T&m_3+a_3\\\bm{0}&\bm{I}_{n-2}&\bm{m}_2\\0&\bm{0}^T&1\end{psmallmatrix},
\end{align*}
which proves (iii).

Let us proof the final claim.
We will show that \(M_1M_2\cdots M_{k-1}M_k= M_2M_3\cdots M_kM_1\). 
The other cyclic permutations are proven analogously. Note that (iii) is equivalent to \(M^{-1}AM = A\) for any \(A \in \Omega\) and \(M\in\heisc\). If \(N=M_1M_2\cdots M_{k-1}M_k \in \Omega\), then we have
\[
N = M_1^{-1} N M_1 = M_2M_3\cdots M_kM_1,
\]
which proves the claim.

\end{proof}

We require the following technical lemma that allows us to calculate the value in top right corner for particular products.
The claim is proven by a direct computation.
\begin{lemma}\label{lem:purewithcommutators}
Let \(M_{1},M_{2},\ldots,M_{k}\in \heisc\) such that \(M_{1}M_{2}\cdots M_{k} \in \Omega\) and let \(\ell\geq 1\).
Then, 
\[(M_{1}^\ell M_{2}^\ell\cdots M_{k}^\ell)_{1,n}= \ell \sum_{i=1}^k \left(c_i-\frac{1}{2}\veca_i^T\vecb_i\right)+\frac{\ell^2}{2}\sum_{1\leq i<j\leq k-1}[M_i,M_j],\]
where \(\psi(M_i)=(\bm{a}_i,\bm{b}_i,c_i)\) for each \(i=1,\ldots,k\).
\end{lemma}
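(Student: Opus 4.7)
The plan is to compute the top-right entry of $M_1^\ell M_2^\ell \cdots M_k^\ell$ by a direct but carefully organized calculation, and then apply the assumption $M_1 M_2 \cdots M_k \in \Omega$ to rewrite the result in commutator form.

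First I would establish the auxiliary formula $\psi(M_i^\ell) = (\ell\veca_i, \ell\vecb_i, \ell c_i + \binom{\ell}{2}\veca_i^T\vecb_i)$ by a routine induction on $\ell$, using the general multiplication rule for Heisenberg matrices. Next, applying that same rule iteratively to the product of the powers, the top-right entry of $M_1^\ell \cdots M_k^\ell$ works out to
\[
\sum_{i=1}^k\!\left(\ell c_i + \binom{\ell}{2}\veca_i^T\vecb_i\right) + \ell^{2}\!\!\sum_{1\le i<j\le k}\!\veca_i^T\vecb_j.
\]
Expanding $\binom{\ell}{2} = \tfrac{\ell^2-\ell}{2}$, this splits cleanly into a linear-in-$\ell$ piece $\ell\sum_i(c_i - \tfrac{1}{2}\veca_i^T\vecb_i)$ and a quadratic-in-$\ell$ piece $\tfrac{\ell^2}{2}\bigl(\sum_i \veca_i^T\vecb_i + 2\sum_{i<j}\veca_i^T\vecb_j\bigr)$.

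The main step is then to rewrite the quadratic piece as a sum of commutators. For this I would use the hypothesis $M_1 M_2\cdots M_k \in \Omega$, which gives $\sum_i \veca_i = \sum_i \vecb_i = \vecz$. Expanding $0 = (\sum_i \veca_i)^T(\sum_j \vecb_j)$ yields the identity
\[
\sum_{i=1}^k \veca_i^T\vecb_i + \sum_{1\le i<j\le k}\veca_i^T\vecb_j + \sum_{1\le i<j\le k}\veca_j^T\vecb_i \;=\; 0,
\]
so that $\sum_i \veca_i^T\vecb_i + 2\sum_{i<j}\veca_i^T\vecb_j$ equals $\sum_{1\le i<j\le k}(\veca_i^T\vecb_j - \veca_j^T\vecb_i) = \sum_{1\le i<j\le k}[M_i,M_j]$.

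Finally, to shrink the upper index from $k$ to $k-1$ (as in the statement), I would show $\sum_{i<k}[M_i,M_k]=0$. Using $\veca_k = -\sum_{i<k}\veca_i$ and $\vecb_k = -\sum_{i<k}\vecb_i$, the sum $\sum_{i<k}(\veca_i^T\vecb_k - \veca_k^T\vecb_i)$ collapses to $(-\veca_k)^T\vecb_k - \veca_k^T(-\vecb_k)=0$, so the commutator sum over $1\le i<j\le k$ coincides with the one over $1\le i<j\le k-1$. Substituting back delivers the claimed formula. The only real obstacle is the bookkeeping in steps two and three; once the $\Omega$ constraint is used to turn the symmetric bilinear combination into an antisymmetric one, the commutator structure appears automatically.
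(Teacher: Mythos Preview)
Your proof is correct and follows essentially the same approach as the paper: a direct computation of the top-right entry followed by algebraic manipulation using the constraint $\sum_i\veca_i=\sum_i\vecb_i=\vecz$ coming from the $\Omega$ hypothesis. Your organisation is in fact a bit cleaner than the paper's---expanding $0=(\sum_i\veca_i)^T(\sum_j\vecb_j)$ to convert the symmetric bilinear sum into the antisymmetric commutator sum in one step, and then handling the index reduction $k\to k-1$ separately, whereas the paper reaches the same conclusion via a longer term-by-term rearrangement that isolates the index~$k$ first.
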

\begin{proof}
Denote \(\psi(M_i)=(\veca_i,\vecb_i,c_i)\).
A direct calculation shows that the element in the top right corner is
\[\sum_{i=1}^k\ell c_i + \sum_{i=1}^k \frac{\ell(\ell-1)}{2}\veca_i^T\vecb_i+\ell^2\sum_{1\leq i<j\leq k}\veca_i^T\vecb_j.\]
That is, the coefficient of \(\ell\) is already in the desired form.
Let us rewrite the coefficient of \(\ell^2\) as follows:
\begin{align*}
    \sum_{i=1}^{k-1} \frac{1}{2}\veca_i^T\vecb_i &+ \frac{1}{2}\veca_k^T\vecb_k+\sum_{1\leq i<j\leq k-1}\veca_i^T\vecb_j+\sum_{i=1}^{k-1}\veca_i^T\vecb_k  \\
    &= \sum_{i=1}^{k-1} \frac{1}{2}\veca_i^T\vecb_i+\sum_{1\leq i<j\leq k-1}\veca_i^T\vecb_j+\sum_{i=1}^{k-1}\veca_i^T\vecb_k+\veca_k^T\vecb_k-\frac{1}{2}\veca_k^T\vecb_k  \\
    &= \sum_{i=1}^{k-1} \frac{1}{2}\veca_i^T\vecb_i+\sum_{1\leq i<j\leq k-1}\veca_i^T\vecb_j+\sum_{i=1}^{k}\veca_i^T\vecb_k-\frac{1}{2}\veca_k^T\vecb_k  \\
    &= \sum_{i=1}^{k-1} \frac{1}{2}\veca_i^T\vecb_i+\sum_{1\leq i<j\leq k-1}\veca_i^T\vecb_j-\frac{1}{2}\veca_k^T\vecb_k,
\end{align*}
where the first equality was obtained by taking terms with subindex \(k\) out of the sums, the second by adding \(\frac{1}{2}\veca_k^T\vecb_k-\frac{1}{2}\veca_k^T\vecb_k\), and the final two by observing that \(\sum_{i=1}^k\veca_i=\vecz\).
Now, \(\veca_k^T\vecb_k\) can be rewritten as
\[\veca_k^T\vecb_k=\left(\sum_{i=1}^{k-1}\veca_i\right)^T \cdot \sum_{j=1}^{k-1}\vecb_j = \sum_{1\leq i<j\leq k-1}\veca_i^T\vecb_j+\sum_{1\leq j<i\leq k-1}\veca_i^T\vecb_j+\sum_{i=1}^{k-1}\veca_i^T\vecb_i.\]
Combining this with the previous equation, we finally obtain that the coefficient of \(\ell^2\) is
\begin{align*}
    \sum_{i=1}^{k-1} \frac{1}{2}\veca_i^T\vecb_i &+ \sum_{1\leq i<j\leq k-1}\veca_i^T\vecb_j-\frac{1}{2}\left(\sum_{1\leq i<j\leq k-1}\veca_i^T\vecb_j+\sum_{1\leq j<i\leq k-1}\veca_i^T\vecb_j+\sum_{i=1}^{k-1}\veca_i^T\vecb_i\right) \\
    &= \frac{1}{2}\left(\sum_{1\leq i<j\leq k-1}\veca_i^T\vecb_j-\sum_{1\leq j<i\leq k-1}\veca_i^T\vecb_j\right)
    = \frac{1}{2}\sum_{1\leq i<j\leq k-1}[M_i,M_j].
\end{align*}
Thus completing the proof. 
\end{proof}

If we further assume that the matrices from the previous lemma commute, then for every \(M\in\shuffle(M_{1}^\ell,M_{2}^\ell,\ldots, M_{k}^\ell)\): 
\begin{equation}\label{commutingURval}
M_{1,n}  =  \ell \sum_{i=1}^k \left(c_i-\frac{1}{2}\veca_i^T\vecb_i\right)+\frac{\ell^2}{2}\sum_{1\leq i<j\leq k-1}[M_i,M_j] = \ell \sum_{i=1}^k \left(c_i-\frac{1}{2}\veca_i^T\vecb_i\right),
\end{equation}
noting that \([M_i,M_j] = 0\) when matrices \(M_i\) and \(M_j\) commute.

In Lemma~\ref{lem:purewithcommutators}, the matrix product has an ordering which yielded a simple presentation of the value in the top right corner.
In the next lemma, we consider an arbitrary shuffle of the product and show that the commutators are important when expressing the top right corner element.

\begin{lemma}\label{lem:anyshuffle}
Let \(M_{1},M_{2},\ldots,M_{k}\in \heisc\) such that \(M_{1}M_{2}\cdots M_{k} \in \Omega\) and let \(\ell\geq 1\).
Let \(M\) be a shuffle of the product \(M_1^\ell M_2^\ell\cdots M_k^\ell\) by a permutation \(\sigma\) that acts on \(k\ell\) elements.
Then
\[(M)_{1,n}=\ell \sum_{i=1}^k \left(c_i-\frac{1}{2}\veca_i^T\vecb_i\right)+\frac{\ell^2}{2}\sum_{1\leq i<j\leq k-1} [M_i,M_j] -\sum_{1\leq i<j\leq k}z_{ji}[M_i,M_j],\]
where \(\psi(M_i)=(\bm{a}_i,\bm{b}_i,c_i)\) for \(i=1,\ldots,k\), and \(z_{ji}\) is the number of times \(M_j\) appears before \(M_i\) in the product; so \(z_{ji}\) is the number of inversions of \(i,j\) in \(\sigma\).
\end{lemma}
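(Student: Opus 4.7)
The plan is to extend the computation of Lemma~\ref{lem:purewithcommutators} by parametrising it over the permutation $\sigma$. I would write $M = N_1 N_2 \cdots N_{k\ell}$ with $N_p = M_{f(p)}$ for some $f:\{1,\ldots,k\ell\}\to\{1,\ldots,k\}$ induced by $\sigma$, and apply the standard Heisenberg product rule (the same one used at the start of the proof of Lemma~\ref{lem:purewithcommutators}) to get
\[(M)_{1,n} \;=\; \sum_{p=1}^{k\ell} c_{f(p)} \;+\; \sum_{1\le p<q\le k\ell}\veca_{f(p)}^{T}\vecb_{f(q)}.\]
The first sum is independent of $\sigma$ and equals $\ell\sum_i c_i$, so every shuffle-dependent term lives in the double sum.

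The key step is to regroup the double sum by generator-index pair $(i,j)=(f(p),f(q))$. Diagonal pairs with $i=j$ contribute the shuffle-invariant quantity $\binom{\ell}{2}\sum_i\veca_i^{T}\vecb_i$. For $i\ne j$, let $w_{ij}$ count the pairs $(p,q)$ with $p<q$, $f(p)=i$, $f(q)=j$. Then $w_{ij}+w_{ji}=\ell^{2}$ for each $i<j$, and, by the very definition given in the statement, $w_{ji}=z_{ji}$. Substituting and using $[M_i,M_j]=\veca_i^{T}\vecb_j-\veca_j^{T}\vecb_i$ produces
\[\sum_{i<j}\bigl(w_{ij}\veca_i^{T}\vecb_j+w_{ji}\veca_j^{T}\vecb_i\bigr) \;=\; \ell^{2}\sum_{i<j}\veca_i^{T}\vecb_j \;-\; \sum_{i<j}z_{ji}[M_i,M_j].\]

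Finally I would match the shuffle-invariant remainder with $(M_1^{\ell}M_2^{\ell}\cdots M_k^{\ell})_{1,n}$. Specialising to the canonical ordering gives $z_{ji}\equiv 0$, so the pieces assembled above must already equal the value provided by Lemma~\ref{lem:purewithcommutators}; alternatively the identity is checked directly by invoking the $\Omega$-hypothesis $\sum_i\veca_i=\sum_i\vecb_i=\vecz$, which forces $\sum_{i,j}\veca_i^{T}\vecb_j=0$ and converts $\ell\sum_i c_i+\frac{\ell(\ell-1)}{2}\sum_i\veca_i^{T}\vecb_i+\ell^{2}\sum_{i<j}\veca_i^{T}\vecb_j$ into $\ell\sum_i(c_i-\tfrac12\veca_i^{T}\vecb_i)+\tfrac{\ell^{2}}{2}\sum_{1\le i<j\le k-1}[M_i,M_j]$, exactly the expression from Lemma~\ref{lem:purewithcommutators}. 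Adding the correction $-\sum_{i<j}z_{ji}[M_i,M_j]$ delivers the claim. The main pitfall is purely notational: one has to keep the antisymmetry of the commutator straight so that each inversion $z_{ji}$ with $i<j$ is correctly paired with $[M_i,M_j]$ rather than with $[M_j,M_i]=-[M_i,M_j]$. An equivalent alternative is an induction on the number of adjacent transpositions that carry the canonical product into $M$: a swap of two neighbouring factors labelled $M_a$ and $M_b$ shifts $(\cdot)_{1,n}$ by exactly $-[M_a,M_b]$, since all pair contributions involving positions outside the swap cancel term-by-term, and creating a new inversion of $(i,j)$ therefore costs exactly $-[M_i,M_j]$.
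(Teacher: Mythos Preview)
Your proposal is correct and follows essentially the same approach as the paper: both start from the direct Heisenberg product formula, regroup the cross-terms by generator-index pair, exploit the identity $w_{ij}+w_{ji}=\ell^2$ to isolate the inversion count, and then defer to Lemma~\ref{lem:purewithcommutators} (equivalently, the $\Omega$-hypothesis $\sum_i\veca_i=\sum_i\vecb_i=\vecz$) to put the shuffle-invariant part into the stated form. Your remark about the adjacent-transposition induction is a pleasant extra but not needed, and the paper does not mention it.
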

\begin{proof}
As in Lemma~\ref{lem:purewithcommutators}, we proceed by a direct calculation.
Denote \(\psi(M_i)=(\veca_i,\vecb_i,c_i)\).
First, let us denote by \(z_{ij}\) the number of times matrix \(M_i\) is to the left of matrix \(M_j\) in \(M\).
Note, that \(z_{ij}+z_{ji}=\ell^2\) as there are in total \(\ell^2\) multiplications of \(M_i\) and \(M_j\).
Now, the direct calculation of the top right element in \(M\) is
\[\sum_{i=1}^k\ell c_i + \sum_{i=1}^k \frac{\ell(\ell-1)}{2}\veca_i^T\vecb_i+\sum_{1\leq i<j\leq k}z_{ij}\veca_i^T\vecb_j+\sum_{1\leq j<i\leq k}z_{ij}\veca_i^T\vecb_j.\]
As in the proof of Lemma~\ref{lem:purewithcommutators}, the \(\ell\) terms are as in the claim.
Let us focus on the \(\ell^2\) term.
We add \(\sum_{1\leq i<j\leq k}z_{ji}\veca_i^T\vecb_j-\sum_{1\leq i<j\leq k}z_{ji}\veca_i^T\vecb_j=0\) to the term, resulting in
\begin{align*}
    \ell^2\sum_{i=1}^k \frac{1}{2}\veca_i^T\vecb_i &+\sum_{1\leq i<j\leq k}z_{ij}\veca_i^T\vecb_j+\sum_{1\leq j<i\leq k}z_{ij}\veca_i^T\vecb_j+\sum_{1\leq i<j\leq k}z_{ji}\veca_i^T\vecb_j-\sum_{1\leq i<j\leq k}z_{ji}\veca_i^T\vecb_j \\
    &= \ell^2\sum_{i=1}^k \frac{1}{2}\veca_i^T\vecb_i+\ell^2\sum_{1\leq i<j\leq k}\veca_i^T\vecb_j+\sum_{1\leq j<i\leq k}z_{ij}(\veca_i^T\vecb_j-\veca_j^T\vecb_i)\\
    &=\frac{\ell^2}{2}\sum_{1\leq i<j\leq k-1}[M_i,M_j]-\sum_{1\leq i<j\leq k}z_{ji}[M_i,M_j].
\end{align*}
In the above equation the first equality follows from the fact that \(z_{ij}+z_{ji}=\ell^2\) and the second equality can be obtained via analogous calculation as in the proof of Lemma~\ref{lem:purewithcommutators}. 
\end{proof}

The crucial observation is that regardless of the shuffle, the top right corner element has a common term, namely \(\sum_{i=1}^k (c_i-\frac{1}{2}\veca_i^T\vecb_i)\), plus some linear combination of commutators.
We call the common term the \emph{shuffle invariant}. 
Note that the previous lemmas apply to any Heisenberg matrices, even those in \(\heiscomplex\). For the remainder of the section, we restrict considerations to matrices in \(G\).

\begin{definition}[Shuffle Invariant]\label{shufinvdef}
Let \(M_{1},\ldots,M_{k}\in G\) be such that \(X = M_{1}\cdots M_{k} \in \Omega\). The Parikh vector of occurrences of each matrix from \(G\) in product \(X\) may be written as \(\bm{x} = (m_1,  \ldots, m_{t}) \in \mathbb{N}^{t}\) where \(t = |G|\) as before. Define \(\Lambda_{\bm{x}} = 
\sum_{i=1}^{t}m_i(c_i-\frac{1}{2}\veca_i^T\vecb_i)\) as the shuffle invariant of Parikh vector~\(\bm{x}\).
\end{definition}

Note that the shuffle invariant is dependant only on the generators used in the product and the Parikh vector \(\bm{x}\).

Let \(\mathcal{S} = \{\bm{s}_1, \ldots, \bm{s}_p\} \subseteq \mathbb{N}^k\) be the set of minimal solutions to the system of linear Diophantine equations for \(G\) giving an \(\Omega\)-matrix, as described in the beginning of the section.
Each \(\bm{s}_j\) thus induces a shuffle invariant that we denote \(\Lambda_{\bm{s}_j} \in \GR\) as shown in Definition~\ref{shufinvdef}. The Parikh vector of any \(X = M_{1}M_{2}\cdots M_{k}\) with \(X \in \Omega\), denoted \(\bm{x}\), is a linear combination of elements of \(\mathcal{S}\), i.e., 
\(
\bm{x} = \sum_{j=1}^p y_j \bm{s}_j
\). 
We then note that the shuffle invariant \(\Lambda_{\bm{x}}\) of \(\bm{x}\) is 
\[
\Lambda_{\bm{x}} = \sum_{j=1}^p y_j \Lambda_{\bm{s}_j},
\]
i.e., a linear combination of shuffle invariants of \(\mathcal{S}\).

Finally, it follows from Lemma~\ref{lem:anyshuffle} that for any \(X \in \shuffle(M_{1}, M_{2}, \ldots, M_{k})\), where as before \(M_1M_2 \cdots M_k \in \Omega\)  and whose Parikh vector is \(\bm{x} = \sum_{j=1}^p y_j \bm{s}_j\), the top right entry of \(X\) is equal to
\begin{eqnarray}\label{EqnShufInvars}
X_{1,n} &=& \Lambda_{\bm{x}} +  \sum_{1\leq i<j\leq k}  \alpha_{ij}[M_i,M_j]\ =\ \sum_{j=1}^p y_j \Lambda_{\bm{s}_j} +  \sum_{1\leq i<j\leq k}  \alpha_{ij}[M_i,M_j],
\end{eqnarray}
where each \(\alpha_{ij} \in \mathbb{Q}\) depends on the shuffle.

Furthermore, if a product of Heisenberg matrices is an \(\Omega\)-matrix and all matrix pairs share a common angle \(\gamma\) for their commutators, then shuffling the matrix product only modifies the top right element of the matrix by a real multiple of \(\exp(\iu\gamma)\). This drastically simplifies our later analysis.

\section{The Identity Problem for subsemigroups of \(\heisc\)}\label{SecIdentity}
In this section, we prove our main result. 

\begin{theorem}\label{thm:main}
Let \(G\subseteq \heisc\) be a finite set of matrices.
Then it is decidable in polynomial time if \(\bm{I}\in\langle G\rangle\).
\end{theorem}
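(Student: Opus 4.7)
The plan is to reduce the Identity Problem to analyzing \(\Omega_{\langle G\rangle}=\langle G\rangle\cap\Omega\), since \(\bm{I}\in\Omega\). First I would eliminate redundant matrices using Lemma~\ref{Ptime}: a Parikh vector \(\bx\in\mathbb{N}^t\) produces an \(\Omega\)-matrix if and only if \(A\bx=\bm{0}\) in the system \eqref{eq:matA}, and non-redundancy is tested one generator at a time by adding the constraint \(x_i\geq 1\). If no such positive solution exists, then \(\Omega_{\langle G\rangle}=\emptyset\) and the identity is unreachable. Otherwise let \(\mathcal{S}=\{\bm{s}_1,\ldots,\bm{s}_p\}\) be the (potentially exponentially many) minimal solutions; we never enumerate \(\mathcal{S}\) but reason about it implicitly through linear feasibility in the \(\bx\)-coordinates.

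Next I would split into two cases based on whether the commutator angles of pairs in \(G\) all agree. If there exist \(M_1,M_2,M_3,M_4\in G\) with \([M_1,M_2]\not\gammaeq[M_3,M_4]\), then Lemma~\ref{lem:notallsamegamma} would directly give \(\bm{I}\in\langle G\rangle\): intuitively, once four commutators with angles spanning more than a line are available, the ``not contained in an open half-plane'' property allows one to shuffle a fixed \(\Omega\)-producing product so that its top-right entry is forced to zero. Detecting this case just requires scanning \(O(|G|^4)\) tuples and comparing the arguments of the associated commutators.

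Otherwise every commutator shares a single angle \(\gamma\) (or vanishes). By Lemma~\ref{lem:anyshuffle} and \eqref{EqnShufInvars}, any \(\Omega\)-matrix \(X\) with Parikh vector \(\bx\) satisfies
\[
X_{1,n}\;=\;\Lambda_{\bx}+r\exp(\iu\gamma)\qquad\text{for some }r\in\R,
\]
where \(\Lambda_{\bx}=\sum_{i=1}^{t}m_i\bigl(c_i-\tfrac{1}{2}\veca_i^T\vecb_i\bigr)\) is linear in \(\bx\). Hence \(X_{1,n}=0\) is attainable only if \(\Lambda_{\bx}\) lies on the \(\gamma\)-line through the origin, a condition expressible as a single additional linear equation over \(\Q\) in \(\bx\). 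Using Lemma~\ref{Ptime} on \(A\bx=\bm{0}\) together with this equation and \(\bx\geq\bm{0}\), polynomial time suffices to decide feasibility. If infeasible, \(\bm{I}\notin\langle G\rangle\). If feasible with some \(\bx\) whose support contains two non-commuting generators, Lemma~\ref{lem:samegamma} supplies a shuffle realizing the sign and magnitude of \(r\) needed to cancel \(\Lambda_{\bx}\), giving the identity. If the \(\gamma\)-line is reachable only via Parikh vectors supported on pairwise-commuting generators, then by \eqref{commutingURval} the top-right entry collapses to \(\sum_i m_i\bigl(c_i-\tfrac{1}{2}\veca_i^T\vecb_i\bigr)\), so one final linear equation reduces membership of \(\bm{I}\) to another feasibility question handled by Lemma~\ref{Ptime}.

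The main obstacle I anticipate is complexity bookkeeping in the same-angle case: \(\mathcal{S}\) may be of exponential size, so every feasibility test must be phrased in the original Parikh coordinates \(\bx\in\mathbb{N}^t\) without ever materializing a basis of shuffle invariants. The saving grace is that both the \(\Omega\)-constraint \(A\bx=\bm{0}\) and the condition ``\(\Lambda_{\bx}\) lies on the \(\gamma\)-line'' are linear, so the whole procedure reduces to a constant number of polynomial-sized instances of Lemma~\ref{Ptime}. The genuine mathematical content lies in Lemmas~\ref{lem:notallsamegamma} and~\ref{lem:samegamma}: proving the half-plane argument in the different-angle case, and showing that whenever \(\Lambda_{\bx}\) lands on the \(\gamma\)-line and non-commuting generators are available, the commutator contribution \(r\exp(\iu\gamma)\) can be tuned to cancel \(\Lambda_{\bx}\) exactly by an appropriate shuffle (with an integer multiplier \(\ell\) to clear denominators, as in Lemma~\ref{lem:purewithcommutators}).
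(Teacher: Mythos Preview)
Your proposal is correct and follows essentially the same route as the paper: remove redundant generators via Lemma~\ref{Ptime}, branch on whether all pairwise commutators share a common angle, invoke Lemma~\ref{lem:notallsamegamma} when they do not, and otherwise invoke Lemma~\ref{lem:samegamma}, with all feasibility questions posed in the Parikh coordinates \(\bx\) rather than over the exponential basis \(\mathcal{S}\). Two cosmetic remarks: the different-angle test needs only \(O(|G|^2)\) commutators (compare all pairs, not quadruples), and in Lemma~\ref{lem:notallsamegamma} the relevant half-plane condition is the \emph{closed} one, not the open one.
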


The proof relies on analysing generators used in a product that results in an \(\Omega\)-matrix.
There are two distinct cases to consider:
either there is a pair of commutators with distinct angles, or else all commutators have the same angle.
The former case is considered in Lemma~\ref{lem:notallsamegamma} and the latter in Lemma~\ref{lem:samegamma}.
More precisely, 
we will prove that in the former case, the identity matrix is always in the generated semigroup and that the latter case reduces to deciding whether shuffle invariants reach the line defined by the angle of the commutator.

The two cases are illustrated in Figure~\ref{fig:gammalambda}.
On the left, is a depiction of the case where there are at least two commutators with different angles, \(\gamma_1\) and \(\gamma_2\).
We will construct a sequence of products where the top right element tends to \(r_1 \exp(\iu\gamma_1)\) with positive \(r_1\) and another product that tends to \(r_2 \exp(\iu\gamma_1)\) with negative \(r_2\).
This is achieved by changing the order of matrices whose commutator has angle \(\gamma_1\).
Similarly, we construct two sequences of products where the top right elements tend to \(r_3\exp(\iu\gamma_2)\) and \(r_4\exp(\iu\gamma_2)\), where \(r_3\) and \(r_4\) have the opposite signs.
Together these sequences ensure, that eventually, the top right elements do not lie in the same open half-planes.
On the right, is a depiction of the other case, where all commutators lie on \(\gamma\)-line.
In this case, the shuffle invariants of products need to be used to reach the line.

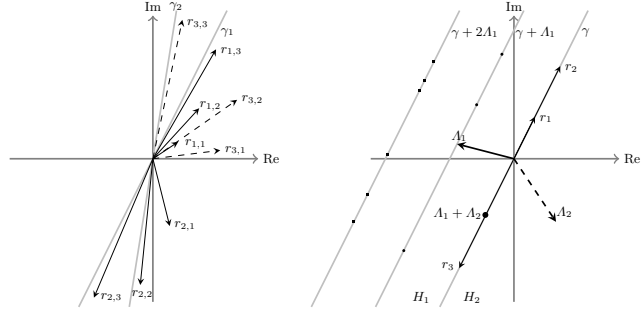
\begin{figure}[htb]
\centering
\scalebox{1}{
\begin{tikzpicture}

\begin{scope}[scale=.7, every node/.style={scale=0.7}]

\draw[->,gray,thick] (-3.4,0) -- (2.5,0);
\draw[->,gray,thick] (0,-3.4) -- (0,3.4);
\node[above] at (0,3.4) {$\Im$};
\node[right] at (2.5,0) {$\Re$};

\draw[thick,gray!50] (-1.76,-3.52) -- (1.76,3.52);
\node[right] at (1.5,3) {\(\gamma\)};

\draw[thick,gray!50] (-1.76-1.525,-3.52) -- (1.76-1.525,3.52);
\node[right] at (1.45-1.525,3) {\footnotesize\(\gamma+\Lambda_1\)};

\draw[thick,gray!50] (-1.76-2*1.525,-3.52) -- (1.76-2*1.525,3.52);
\node[right] at (1.45-2*1.525,3) {\footnotesize\(\gamma+2\Lambda_1\)};

\draw[-stealth,thick] (0,0) -- (-1.35,0.35);
\node[above] at (-1.3,0.3) {\(\Lambda_1\)};
\draw[-stealth,dashed,thick] (0,0) -- (1,-1.5);
\node[above] at (1.2,-1.5) {\(\Lambda_2\)};

\fill (-0.65cm-0.8pt,-1.3cm-1pt) circle (2pt);
\node[left] at (-0.65,-1.3) {\small\(\Lambda_1+\Lambda_2\)};

\draw[-stealth] (0,0) -- (0.5,1);
\node[right] at (0.5,0.95) {\small\(r_1\)};
\draw[-stealth] (0,0) -- (1.1,2.2);
\node[right] at (1.1,2.15) {\small\(r_2\)};
\draw[-stealth] (0,0) -- (-1.3,-2.6);
\node[left] at (-1.3,-2.55) {\small\(r_3\)};

\fill (0.5cm-1.35cm-1pt,0.95cm+0.35cm-0.5pt) circle (1pt);
\fill (1.1cm-1.35cm-1pt,2.15cm+0.35cm-0.5pt) circle (1pt);
\fill (-1.3cm-1.35cm+1pt,-2.55cm+0.35cm+0.5pt) circle (1pt);

\fill (0.8cm-2*1.35cm-1pt,1.6cm+2*0.35cm-1pt) rectangle (0.8cm-2*1.35cm+1pt,1.6cm+2*0.35cm+1pt);
\fill (-0.3cm-2*1.35cm-1pt,-0.6cm+2*0.35cm-1pt) rectangle (-0.3cm-2*1.35cm+1pt,-0.6cm+2*0.35cm+1pt);
\fill (-1.1cm-2*1.35cm-1pt,-2.2cm+2*0.35cm-1pt) rectangle (-1.1cm-2*1.35cm+1pt,-2.2cm+2*0.35cm+1pt);
\fill (-0.77cm-2*1.35cm-1pt,-1.54cm+2*0.35cm-1pt) rectangle (-0.77cm-2*1.35cm+1pt,-1.54cm+2*0.35cm+1pt);
\fill (0.46cm-2*1.35cm-1pt,0.92cm+2*0.35cm-1pt) rectangle (0.46cm-2*1.35cm+1pt,0.92cm+2*0.35cm+1pt);
\fill (0.58cm-2*1.35cm-1pt,1.16cm+2*0.35cm-1pt) rectangle (0.58cm-2*1.35cm+1pt,1.16cm+2*0.35cm+1pt);

\node at (-2.2,-3.3) {\(H_1\)};
\node at (-1,-3.3) {\(H_2\)};
\end{scope}

\begin{scope}[xshift=-6cm,scale=.7, every node/.style={scale=0.7}]

\draw[->,gray,thick] (-3.4,0) -- (2.5,0);
\draw[->,gray,thick] (0,-3.4) -- (0,3.4);
\node[above] at (0,3.4) {$\Im$};
\node[right] at (2.5,0) {$\Re$};

\draw[thick,gray!50] (-1.76,-3.52) -- (1.76,3.52);
\node[right] at (1.5,3) {\(\gamma_1\)};

\draw[thick,gray!50] (-0.56,-3.52) -- (0.56,3.52);
\node[above] at (0.55,3.4) {\(\gamma_2\)};

\draw[-stealth] (0,0) -- (0.6,0.4);
\node[right] at (0.65,0.35) {\small\(r_{1,1}\)};
\draw[-stealth] (0,0) -- (1.1,1.2);
\node[right] at (1,1.25) {\small\(r_{1,2}\)};
\draw[-stealth] (0,0) -- (1.5,2.6);
\node[right] at (1.5,2.55) {\small\(r_{1,3}\)};

\draw[-stealth] (0,0) -- (0.4,-1.6);
\node[right] at (0.4,-1.6) {\small\(r_{2,1}\)};
\draw[-stealth] (0,0) -- (-0.3,-3);
\node[below] at (-0.25,-3) {\small\(r_{2,2}\)};
\draw[-stealth] (0,0) -- (-1.4,-3.3);
\node[right] at (-1.35,-3.3) {\small\(r_{2,3}\)};

\draw[-stealth,dashed] (0,0) -- (1.6,0.2);
\node[right] at (1.6,0.2) {\small\(r_{3,1}\)};
\draw[-stealth,dashed] (0,0) -- (2,1.4);
\node[right] at (2,1.4) {\small\(r_{3,2}\)};
\draw[-stealth,dashed] (0,0) -- (0.7,3.3);
\node[right] at (0.7,3.2) {\small\(r_{3,3}\)};

\end{scope}

\end{tikzpicture}}
\caption{\label{fig:gammalambda} Illustrations of Lemma~\ref{lem:notallsamegamma} and Lemma~\ref{lem:samegamma}. Left shows two lines defined by two different commutators and how the values \(r_{1,\ell}\) and \(r_{2,\ell}\) tend to \(\gamma_1\)-line in opposite directions, while \(r_{3,\ell}\) tends to \(\gamma_2\)-line (\(r_{4,\ell}\) is omitted for clarity). Eventually, they are not all within the same closed half-plane. Right shows that if there is only one shuffle invariant, say \(\Lambda_1\), then all reachable values are on lines parallel to the \(\gamma\)-line, namely, \(\gamma+k\Lambda_1\) for \(k>0\). But if there exists \(\Lambda_2\) in the opposite half-plane, then the \(\gamma\)-line itself is reachable.}
\end{figure}

\begin{lemma}\label{lem:notallsamegamma}
Let $G=\{G_1,\ldots,G_t\}\subseteq \heisc$, where each \(G_i\) is non-redundant.
Suppose there exist \(M_1, M_2, M_3, M_4\in G\) such that \([M_{1},M_{2}]\not\gammaeq[M_{3},M_{4}]\). Then $\bm{I}\in \langle G\rangle$.
\end{lemma}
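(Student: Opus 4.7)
The plan is to produce four products $S_1, S_2, S_3, S_4 \in \Omega \cap \langle G\rangle$ whose top-right entries $v_1, v_2, v_3, v_4 \in \GR$ do not lie in any common open half-plane of $\mathbb{C}$, and then exploit the abelian additive structure of $\Omega$ to combine them into the identity matrix. Once such four products are in hand, the finish is quick: by the separating-hyperplane theorem, $\bm{0}$ lies in the closed convex hull of $\{v_1,v_2,v_3,v_4\}$. Since all $v_i$ lie in $\GR$, a rational convex combination exists, so clearing denominators yields non-negative integers $n_1,n_2,n_3,n_4$, not all zero, with $\sum_i n_iv_i = 0$. By Lemma~\ref{omegaPropLem}, $S_1^{n_1}S_2^{n_2}S_3^{n_3}S_4^{n_4}\in \Omega\cap\langle G\rangle$ and its top-right entry is $\sum_i n_iv_i = 0$, so this product equals $\bm{I}\in\langle G\rangle$.

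The construction of the four products goes as follows. Because $G$ is non-redundant, the sum of the minimal solutions in $\mathcal{S}$ gives a Parikh vector $\bm{x}_0\in\N^t$ with $\bm{x}_0(i)\geq 1$ for every $i$; fix any corresponding product $P_0\in \Omega$, which in particular contains at least one occurrence of each of $M_1,M_2,M_3,M_4$. For large $\ell$, consider shuffles of $P_0^\ell$, all of which lie in $\Omega$ (the Parikh vector $\ell\bm{x}_0$ still solves the SLDE). By Lemma~\ref{lem:anyshuffle} each such shuffle has top-right entry
\[
\ell\,\Lambda_{\bm{x}_0}+\frac{\ell^2}{2}\sum_{i<j}[M_i,M_j]-\sum_{i<j}z_{ji}[M_i,M_j],
\]
where the first two summands are shuffle-invariant and each $z_{ji}$ varies over an interval of length $\Theta(\ell^2)$ as the shuffle changes. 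In particular, by swapping \emph{en masse} the $M_1$- and $M_2$-positions we can push the contribution along $w_1:=[M_1,M_2]=r\exp(\iu\gamma)$ by an amount of order $\ell^2$ in one of the two directions $\pm w_1$, and similarly for $w_2:=[M_3,M_4]$.

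For the four products themselves, I would take the four ``corner'' shuffles obtained by the choices of whether all $M_1$'s precede all $M_2$'s (or vice versa) combined with whether all $M_3$'s precede all $M_4$'s (or vice versa), keeping the ordering of everything else fixed. The resulting four top-right values form, up to a common translation, a parallelogram in $\C$ with sides parallel to $w_1$ and $w_2$. Here the hypothesis $\gamma\neq\gamma'$ is decisive: it says $w_1$ and $w_2$ are $\mathbb{R}$-linearly independent, so the parallelogram is non-degenerate and its area grows like $\ell^4|w_1||w_2|\sin(\gamma'-\gamma)$. Meanwhile the ``centre'' of the parallelogram is controlled by the fixed shuffle of the remaining matrix pairs; the freedom to adjust that base shuffle (together with taking $\ell$ sufficiently large) allows us to arrange for $\bm{0}$ to lie inside this parallelogram, so the four vertices are not all contained in any open half-plane.

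The main obstacle is precisely this last geometric step. The bare parallelogram of four shuffle corners only extends from its base in the directions $-w_1,-w_2,-w_1-w_2$, so naively its vertices sit in a quarter-plane and miss the origin. The argument therefore needs (i) a careful choice of the base shuffle of the remaining matrices so that the parallelogram's centre is shifted to an appropriate location, and (ii) an estimate showing that for large $\ell$ the $\Theta(\ell^2)$ diameter of the parallelogram dwarfs the $O(\ell^2)$ offset of its centre, provided the centre is suitably chosen. This is exactly the situation depicted in the left panel of Figure~\ref{fig:gammalambda}: the four sequences of shuffle values, as $\ell\to\infty$, approach the four directions $\pm w_1, \pm w_2$, and because $\gamma\neq\gamma'$ these span $\R^2$ positively, so they eventually cannot be contained in a common open half-plane.
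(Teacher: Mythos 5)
Your finishing argument (separating-hyperplane, rational convex combination, Lemma~\ref{omegaPropLem}) is fine, and the key idea that the antisymmetry of the commutator plus $\gamma\neq\gamma'$ forces the directions $\pm w_1,\pm w_2$ to positively span the plane is exactly what the paper exploits. But the products you actually construct do not achieve the behaviour you need, and you half-notice this yourself.

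The gap is in your ``four corner shuffles.'' By Lemma~\ref{lem:anyshuffle}, every shuffle of $P_0^\ell$ has top-right entry $\ell\Lambda_{\bm{x}_0} + \frac{\ell^2}{2}\sum_{1\le i<j\le k-1}[M_i,M_j] - \sum_{i<j}z_{ji}[M_i,M_j]$. Swapping the $M_1,M_2$ blocks en masse while keeping ``the ordering of everything else fixed'' only flips the $[M_1,M_2]$ contribution; the middle term $\frac{\ell^2}{2}\sum_{i<j}[M_i,M_j]$ is an $\Theta(\ell^2)$ offset shared by all four of your corners, containing \emph{every} pairwise commutator among the $k$ matrices. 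So your four values are $c\pm\alpha w_1\pm\beta w_2$ (with coordinated signs forming a parallelogram), where $c$ is itself $\Theta(\ell^2)$ and generically dominates. As $\ell\to\infty$ the arguments of all four values converge to the argument of $c$, not to $\pm\gamma,\pm\gamma'$, so they remain trapped in one open half-plane and your final claim (``the four sequences of shuffle values approach the four directions $\pm w_1,\pm w_2$'') is simply false for this construction. You flag this (``the bare parallelogram\ldots sits in a quarter-plane and misses the origin'') but your proposed remedy---choose the base shuffle so the centre lands near the origin, then take $\ell$ large---is circular: the diameter and the offset are both $\Theta(\ell^2)$ and both depend on $\ell$ in the same way, so ``$\ell$ large'' gives you nothing, and you have no argument that the centre can be placed near $0$.

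What the paper does differently, and what you need, is to \emph{suppress} the extraneous commutators at the $\ell^2$ level by packaging all the other matrices into a single factor that gets repeated. Concretely, writing $X=M_5\cdots M_k$, the paper considers $M_1^\ell M_2^\ell (M_3M_4X)^\ell$ and the three analogous products. Treating $M_3M_4X$ as a single block $N_3$ and applying Lemma~\ref{lem:purewithcommutators} with $k=3$, the coefficient of $\ell^2$ is $\frac{1}{2}[M_1,M_2]$ and nothing else---the commutators internal to the block, and between the block and $M_1,M_2$, only appear at the $O(\ell)$ level. Thus the four products have $\ell^2$ coefficients exactly $\pm\frac{1}{2}[M_1,M_2]$ and $\pm\frac{1}{2}[M_3,M_4]$, so for $\ell$ large their top-right entries genuinely approach the four directions $\pm w_1,\pm w_2$ and leave every closed half-plane, at which point your closing argument applies. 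Note also that the paper's four products are not the corners of a parallelogram of shuffles of a single base: the two $M_1/M_2$-swaps have $M_3,M_4$ interleaved in the block, while the two $M_3/M_4$-swaps have $M_1,M_2$ interleaved, so the ``keep everything else fixed'' framing you used is not what happens.

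One further small point the paper handles and your proposal glosses over: you need to justify that $M_1,M_2,M_3,M_4$ can be taken to be four \emph{distinct} matrices (so that the product and the block $X$ are well-defined as you wrote them). The paper shows that if, say, $M_1=M_3$, then the $\Omega$-constraint forces $[M_1,M_4]=-[M_1,M_2]$, contradicting $[M_1,M_2]\not\gammaeq[M_3,M_4]$; you should add such a check.
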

\begin{proof}
Since all \(M_i\)'s are non-redundant, there exists a product \(M_1 M_2 M_3 M_4 \cdots M_k\in \Omega\), where \(k\geq 4\) and every \(M_i\) is in \(G\). 
We prove the lemma by considering four products \(M_{1}^\ell M_{2}^\ell (M_{3}M_{4}X)^\ell\), \(M_{2}^\ell M_{1}^\ell (M_{3}M_{4}X)^\ell\), \(M_{3}^\ell M_{4}^\ell (M_{1}M_{2}X)^\ell\) and \(M_{4}^\ell M_{3}^\ell (M_{1}M_{2}X)^\ell\), where
\(X=\prod_{i=5}^{k}M_{i}\).
Denote 
\(\psi(M_i) = \left(\veca_i,\vecb_i,c_i\right)\) for \(i=1,\ldots,4\) and \(\psi(X) = \left(\bm{x}_1,\bm{x}_2,x_3\right)\).

Note that our assumption on the matrices \(M_1, M_2, M_3, M_4\) implies that there are at least four distinct matrices in the product \(M_1M_{2}\cdots M_{k}\in \Omega\) justifying the above notation.
Indeed, if there were at most three, the assumption would not hold.
Assume that \(X\) defined as above is the empty product and \(M_i=M_j\) for some \(i,j=1,2,3,4\) and \(i\neq j\).
For the sake of clarity, we will assume that \(i=1\) and \(j=3\). The other cases are analogous.
Now, \([M_1,M_2]=\veca_1^T\vecb_2-\veca_2^T\vecb_1\) and \([M_1,M_4]=\veca_1^T\vecb_4-\veca_4^T\vecb_1\), and since \(M_1M_2M_1M_4\in \Omega \), we also have
\(2\veca_1+\veca_2+\veca_4 = \vecz\) and \(2\vecb_1+\vecb_2+\vecb_4 = \vecz\).
That is, now 
\[
[M_1,M_4]=\veca_1^T\vecb_4-\veca_4^T\vecb_1=-\veca_1^T(2\vecb_1+\vecb_2)+(2{\veca_1+\veca_2})^T\vecb_1 
=\veca_2^T\vecb_1-\veca_1^T\vecb_2 = -[M_1,M_2],
\]
which implies \( [M_1,M_4] \gammaeq [M_1,M_2]\) contrary to our assumption.

Let $\gamma_1$ and $\gamma_2$ be the angles of \([M_1,M_2]\) and \([M_3,M_4]\), respectively, where \(\gamma_1 \neq \gamma_2\) by assumption. We show that, in the limit as \(\ell\) tends to infinity, the angles of the top-right entries in the first two products tend to \(\gamma_1\), but they approach \(\gamma_1\)-line (that is, the line defined by the vector \(r\exp(\iu\gamma_1)\), where \(r\) is a positive real number) from opposite direction, i.e., one approaches \(r\exp(\iu\gamma_1)\) and the other \(-r\exp(\iu\gamma_1)\).
The same holds for the last two products and the angle \(\gamma_2\). See the left half of Figure~\ref{fig:gammalambda} for illustration.

Let us consider the product \(M_1^\ell M_2^\ell (M_3M_4X)^\ell\) for some $\ell\in\N$.
By Lemma~\ref{lem:purewithcommutators},
\begin{multline*}
(M_1^\ell M_2^\ell (M_3M_4X)^\ell)_{1,n}  = \\ \ell \left(\sum_{i=1}^4c_i+x_3 
-\frac{1}{2}(\veca_1^T\vecb_1+\veca_2^T\vecb_2+(\veca_3+\veca_4+\bm{x}_1)^T(\vecb_3+\vecb_4+\bm{x}_2)\right) + \frac{\ell^2}{2}[M_1,M_2].
\end{multline*}
That is, the coefficient of \(\ell^2\) is \(\frac{1}{2}[M_1,M_2]\).
Similarly by Lemma~\ref{lem:purewithcommutators}, the coefficients of $\ell^2$ in the top right elements of $M_2^\ell M_1^\ell (M_3M_4X)^\ell$, $M_3^\ell M_4^\ell (M_1M_2X)^\ell$ and $M_4^\ell M_3^\ell (M_1M_2X)^\ell$ are
\(
\frac{1}{2}[M_2,M_1]\), \(\frac{1}{2}[M_3,M_4]\), \(\frac{1}{2}[M_4,M_3]
\),
respectively.
Let 
\begin{align*}
[M_1,M_2] = r_1\exp(\iu\gamma_1), &&
[M_2,M_1] = r_2\exp(\iu\gamma_2), \\
[M_3,M_4] = r_3\exp(\iu\gamma_3),  &&
[M_4,M_3] = r_4\exp(\iu\gamma_4).    
\end{align*}
It is convenient to consider these complex numbers as two-dimensional vectors.
Recall that commutator is antisymmetric and hence \(r_1 = -r_2\), \(\gamma_1=\gamma_2\) and \(r_3 = -r_4\), \(\gamma_3 = \gamma_4\).
By our assumption, \([M_1,M_2]\not\gammaeq [M_3,M_4]\) and thus \(\gamma_1\neq \gamma_3\).
It follows that the four vectors are not contained in any \emph{closed} half-plane.
Indeed, \(r_1\exp(\iu\gamma_1)\) and \(-r_1\exp(\iu\gamma_1)\) define two closed half-planes, say \(H_1\) and \(H_2\). Any \emph{closed} half-plane that contains both \(r_1\exp(\iu\gamma_1)\) and \(-r_1\exp(\iu\gamma_1)\) must be equal to either \(H_1\) or \(H_2\).
As \(r_3 = -r_4\), either \(r_3\exp(\iu\gamma_3)\) or \(-r_3\exp(\iu\gamma_3)\) is \emph{not} in that half-plane.

Let us express the top right elements as functions of power \(\ell\):
\begin{align*}
(M_1^{\ell} M_2^{\ell} (M_3M_4X)^{\ell})_{1,n}=r_{1,\ell}\exp(\iu\gamma_{1,\ell}), &&
(M_2^{\ell} M_1^{\ell} (M_3M_4X)^{\ell})_{1,n}=r_{2,\ell}\exp(\iu\gamma_{2,\ell}), \\
(M_3^{\ell} M_4^{\ell} (M_1M_2X)^{\ell})_{1,n}=r_{3,\ell}\exp(\iu\gamma_{3,\ell}), &&
(M_4^{\ell} M_3^{\ell} (M_1M_2X)^{\ell})_{1,n}=r_{4,\ell}\exp(\iu\gamma_{4,\ell}),
\end{align*}
where \(r_{1,\ell},r_{2,\ell},r_{3,\ell},r_{4,\ell}\in\R\) and \(\gamma_{1,\ell},\gamma_{2,\ell},\gamma_{3,\ell},\gamma_{4,\ell}\in[0,\pi)\).

Since \(r_1\exp(\iu\gamma_1)\), \(r_2\exp(\iu\gamma_1)\), \(r_3\exp(\iu\gamma_3)\), and \(r_4\exp(\iu\gamma_3)\) are the coefficients that multiply \(\frac{1}{2}\ell^2\) in the formula for the top-right entry in the above products, and \(\ell^2\) is the highest power that appears there, we conclude that
\[
\lim_{\ell\to\infty} \gamma_{1,\ell} = \lim_{\ell\to\infty} \gamma_{2,\ell} = \gamma_1
\qquad\text{and}\qquad
\lim_{\ell\to\infty} \gamma_{3,\ell} = \lim_{\ell\to\infty} \gamma_{4,\ell} = \gamma_3.
\]
Moreover, for sufficiently large \(\ell\), \(r_{i,\ell}\) and \(r_{i+1,\ell}\) have opposite signs, where \(i=1,3\).

Recall that \(r_1\exp(\iu\gamma_1)\), \(r_2\exp(\iu\gamma_2)\), \(r_3\exp(\iu\gamma_3)\), and \(r_4\exp(\iu\gamma_4)\) do not lie in the same closed half-plane.
It follows that, for sufficiently large \(\ell\), the vectors \(r_{1,\ell}\exp(\iu\gamma_{1,\ell})\), \(r_{2,\ell}\exp(\iu\gamma_{2,\ell})\), \(r_{3,\ell}\exp(\iu\gamma_{3,\ell})\), and \(r_{4,\ell}\exp(\iu\gamma_{4,\ell})\) also do not lie in the same closed half-plane. See the left half of Figure~\ref{fig:gammalambda} for illustration.

Since they do not belong to the same closed half-plane, it follows from the Fundamental Theorem of Linear Inequalities that the vector \(-r_{4,\ell}\exp(\iu\gamma_{4,\ell})\) can be expressed as a nonnegative linear combination of \(r_{1,\ell}\exp(\iu\gamma_{1,\ell})\), \(r_{2,\ell}\exp(\iu\gamma_{2,\ell})\), and \(r_{3,\ell}\exp(\iu\gamma_{3,\ell})\). For example, see Theorem 7.1 in \cite{Schrijver98}, where we set \(a_1 = r_{1,\ell}\exp(\iu\gamma_{1,\ell})\), \(a_2 = r_{2,\ell}\exp(\iu\gamma_{2,\ell})\), \(a_3 = r_{3,\ell}\exp(\iu\gamma_{3,\ell})\)), and \(b = -r_{4,\ell}\exp(\iu\gamma_{4,\ell})\).

Therefore, for for sufficiently large \(\ell\geq 1\), there are \(x_1,x_2,x_3,x_4\in\N\), not all of which are zero, such that
\[
x_1r_{1,\ell}\exp(\iu\gamma_{1,\ell})+x_2r_{2,\ell}\exp(\iu\gamma_{2,\ell})+x_3r_{3,\ell}\exp(\iu\gamma_{3,\ell})+x_4r_{4,\ell}\exp(\iu\gamma_{4,\ell})=0.
\]
This implies that
\[
\left(M_1^{\ell} M_2^{\ell} (M_3M_4X)^{\ell}\right)^{x_1}
\left(M_2^{\ell} M_1^{\ell} (M_3M_4X)^{\ell}\right)^{x_2} 
\left(M_3^{\ell} M_4^{\ell} (M_1M_2X)^{\ell}\right)^{x_3}
\left(M_4^{\ell} M_3^{\ell} (M_1M_2X)^{\ell}\right)^{x_4}
\]
equals the identity matrix \(\bm{I}\), 
finishing the proof. 
\end{proof}

It remains to consider the case when the angles of commutators coincide for each pair of non-redundant matrices.
Our aim is to prove that, under this condition, it is decidable whether the identity matrix is in the generated semigroup.

\begin{lemma}\label{lem:samegamma}
Let \(G=\{G_1,\ldots,G_t\}\subseteq \heisc\) be a set of non-redundant matrices such that the angle of commutator \([G_{i},G_{i'}]\) is \(\gamma\) for all \(1 \leq i, i' \leq t\), then we can determine in polynomial time if $\bm{I}\in \langle G\rangle$.
\end{lemma}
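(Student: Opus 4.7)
The strategy is to encode the reachability of \(\bm{I}\in\langle G\rangle\cap\Omega\) as solvability of a polynomial number of systems of linear Diophantine equations (SLDEs), each decidable in polynomial time by Lemma~\ref{Ptime}. The key structural input comes from Lemma~\ref{lem:anyshuffle} together with equation~(\ref{EqnShufInvars}): under our hypothesis every commutator \([G_i,G_{i'}]\) is a real multiple of \(\exp(\iu\gamma)\), so for any Parikh vector \(\bm{x}\) with \(A\bm{x}=\bm{0}\) every product realising \(\bm{x}\) has top-right entry of the form \(\Lambda_{\bm{x}}+\rho\exp(\iu\gamma)\) with \(\rho\in\R\) depending on the shuffle. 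A necessary condition for reaching top-right entry \(0\) is therefore that \(\Lambda_{\bm{x}}\) itself lies on the \(\gamma\)-line \(\R\exp(\iu\gamma)\). Writing any nonzero commutator as \(a+b\iu\in\GR\), this condition is the single linear equation \(L(\bm{x}):=b\,\Re(\Lambda_{\bm{x}})-a\,\Im(\Lambda_{\bm{x}})=0\) in \(x_1,\ldots,x_t\) with rational coefficients. (If every pair of generators commutes, then \(\rho=0\) always and the question reduces immediately to deciding whether \(A\bm{x}=\bm{0}\), \(\Lambda_{\bm{x}}=0\) admits a non-zero non-negative integer solution, which is one SLDE.)

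Otherwise, I perform the case split suggested by the roadmap. In the ``reachable via a non-commuting pair'' case, I iterate over all \(O(t^2)\) pairs \((i,j)\) with \([G_i,G_j]\ne 0\) and, for each, test whether the system \(A\bm{x}=\bm{0}\), \(L(\bm{x})=0\), \(x_i\ge 1\), \(x_j\ge 1\) has a non-negative integer solution. When such \(\bm{x}\) exists, I prove \(\bm{I}\in\langle G\rangle\) by explicit construction. Pick any enumeration \(M_1,\ldots,M_k\) realising \(\bm{x}\) with \([M_i,M_j]=r\exp(\iu\gamma)\), \(r\ne 0\). By Lemma~\ref{lem:anyshuffle}, shuffles of \((M_1\cdots M_k)^\ell\) yield \(\Omega\)-matrices whose top-right entries equal \(\ell\Lambda_{\bm{x}}+\frac{\ell^2}{2}C_0\exp(\iu\gamma)-\sum z_{ji}[M_i,M_j]\), all lying on the \(\gamma\)-line. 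Varying the inversion count \(z_{ji}\in[0,\ell^2 x_i x_j]\) for the distinguished pair contributes a shuffle-controlled term whose range scales as \(\Theta(\ell^2)\) with step \(|r|>0\), whereas \(\ell\Lambda_{\bm{x}}\) grows only linearly; so for \(\ell\) sufficiently large I obtain two shuffles \(X_+,X_-\in\Omega\) whose top-right entries are real multiples of \(\exp(\iu\gamma)\) of opposite sign. By Lemma~\ref{omegaPropLem} these matrices commute and the top-right is additive under multiplication, so a suitable positive integer combination \(X_+^a X_-^b\)---with \(a,b\in\N\) obtained by clearing denominators in the rational ratio of the two top-right entries---has top-right entry exactly \(0\) and therefore equals \(\bm{I}\).

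In the complementary case, no non-commuting pair yields a solution above, hence every non-negative integer solution to \(A\bm{x}=\bm{0}\) and \(L(\bm{x})=0\) has pairwise commuting support. For such \(\bm{x}\) every commutator in~(\ref{EqnShufInvars}) vanishes and by~(\ref{commutingURval}) every shuffle of the product has top-right entry exactly \(\Lambda_{\bm{x}}\), so \(\bm{I}\in\langle G\rangle\) if and only if the augmented system \(A\bm{x}=\bm{0}\), \(\Re(\Lambda_{\bm{x}})=0\), \(\Im(\Lambda_{\bm{x}})=0\) has a non-zero non-negative integer solution. Testing the constraint \(x_i\ge 1\) for each \(i\) separately, this is again decided in polynomial time by Lemma~\ref{Ptime}.

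The main obstacle is the explicit witness construction in the first case: upgrading a mere ``line-reaching'' Parikh vector that carries a non-commuting pair to a concrete equality \(X_+^a X_-^b=\bm{I}\). The argument hinges on the quadratic-versus-linear gap between shuffle-controlled commutator contributions and the invariant \(\ell\Lambda_{\bm{x}}\), plus the rationality of all quantities involved, which converts an asymptotic sign change into an exact integer cancellation with concrete \(a,b\in\N\) and inversion counts staying within the feasible range \([0,\ell^2 x_i x_j]\). The remaining algorithmic steps reduce to routine applications of Lemma~\ref{Ptime} to SLDEs whose sizes are polynomial in the input.
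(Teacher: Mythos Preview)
Your proposal is correct and follows essentially the same route as the paper: encode ``\(\Lambda_{\bm{x}}\) lies on the \(\gamma\)-line'' as a single rational linear constraint, iterate over non-commuting pairs \((G_i,G_j)\) to test reachability of the \(\gamma\)-line with that pair present, construct \(\bm{I}\) explicitly when such a pair exists, and otherwise reduce to an SLDE for the commuting situation.

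Two small remarks. First, your witness construction via ``varying the inversion count \(z_{ji}\)'' is a bit loose: the inversion counts \(z_{qp}\) are coupled through a single permutation, so one cannot vary just the distinguished one freely while keeping the rest fixed. The paper sidesteps this by grouping the product as three blocks \(N_1=G_i\), \(N_2=G_j\), \(X'=\text{rest}\) and applying Lemma~\ref{lem:purewithcommutators} to \(N_1^\ell N_2^\ell X'^\ell\) versus \(N_2^\ell N_1^\ell X'^\ell\), which makes the \(\pm\frac{\ell^2}{2}[G_i,G_j]\) dichotomy explicit with an identical linear term; your quadratic-versus-linear intuition is exactly right, and this grouping is the clean way to realise it. Second, your handling of the complementary (commuting) case is actually slightly tidier than the paper's: you observe directly that in this case \(\bm{I}\in\langle G\rangle\) is equivalent to solvability of \(A\bm{x}=\bm{0}\), \(\Lambda_{\bm{x}}=0\) over all of \(G\), whereas the paper first computes the subset \(C\subseteq G\) of matrices that can reach the \(\gamma\)-line and then restricts to \(C\). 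Both are correct; your version avoids the extra case analysis on whether shuffle invariants populate one or both open half-planes.
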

\begin{proof}
Let \(\{\bm{s}_1, \ldots, \bm{s}_p\} \subseteq \mathbb{N}^t\) be the set of minimal solutions to the SLDEs for \(G\) giving zeros in \(\bm{a}\) and \(\bm{b}\) elements. Each \(\bm{s}_j\) induces a shuffle invariant \(\Lambda_{\bm{s}_j} \in \GR\) as explained in Definition~\ref{shufinvdef}.

Consider a product \(X = M_{1} \cdots M_{k} \in\Omega\), where each \(M_{i} \in G\).  Let \(\bm{x} = (m_1, m_2, \ldots, m_{t}) \in \mathbb{N}^{t}\) be the Parikh vector of the number of occurrences of each matrix from \(G\) in product \(X\). Since \(X \in \Omega\), we have
\(\bm{x} = \sum_{j = 1}^{p} y_j \bm{s}_j\), where each \(y_j \in \mathbb{N}\). Notice that \(X \in \shuffle(G_1^{m_1}, \ldots, G_t^{m_t})\). Hence, by Equation (\ref{EqnShufInvars}), we have
\begin{equation}\label{eq:X1n}
X_{1,n} = \Lambda_{\bm{x}} +  \sum_{1\leq i<j\leq k}  \alpha_{ij}[M_i,M_j] = \sum_{j=1}^p y_j \Lambda_{\bm{s}_j} +  r \exp(\iu\gamma),
\end{equation}
where \(\alpha_{ij}\in \mathbb{Q}\) and \(r\in \mathbb{R}\) depend on the shuffle. In other words, any shuffle of the product \(X\) will change the top right entry \(X_{1,n}\) by a real multiple of \(\exp(\iu\gamma)\).

Let \(H_1, H_2\) be the two \emph{open} half-planes of the complex plane induced by \(\exp(\iu\gamma)\), that is, the union  \(H_1\cup H_2\) is the complement of the \(\gamma\)-line; thus \(0\not\in H_1\cup H_2\).
We now prove the following claim.

\begin{claim}
If \(\{\Lambda_{\bm{s}_1}, \ldots, \Lambda_{\bm{s}_p}\} \subseteq H_1\) or \(\{\Lambda_{\bm{s}_1}, \ldots, \Lambda_{\bm{s}_p}\} \subseteq H_2\), then $\bm{I}\notin \langle G\rangle$.
\end{claim}
\begin{proof}
Assume that \(\{\Lambda_{\bm{s}_1}, \ldots, \Lambda_{\bm{s}_p}\} \subseteq H_1\), renaming \(H_1, H_2\) if necessary. Assume that there exists some product \(X = X_1 X_2 \cdots X_k\) equal to the identity matrix, where \(k > 0\) and \(X_j \in G\). Then since \(X \in \Omega\), we see from Equation~\eqref{eq:X1n} that 
\(
X_{1,n} = \sum_{j=1}^p y_j \Lambda_{\bm{s}_j} + r \exp(\iu\gamma)
\), where \(r\in \mathbb{R}\).

Clearly, \(\sum_{j=1}^p y_j \Lambda_{\bm{s}_j} \in H_1\), and since \(y_j\neq 0\) for at least one \(j\), we have \(\sum_{j=1}^p y_j \Lambda_{\bm{s}_j} \neq 0\). Now, since \(r \exp(\iu\gamma)\) is on the \(\gamma\)-line, which is the boundary of \(H_1\), the value  \(X_{1,n}\) belongs to \(H_1\) and cannot equal zero. This contradicts the assumption that \(X\) is the identity matrix.
\end{proof}

So, if \(\{\Lambda_{\bm{s}_1}, \ldots, \Lambda_{\bm{s}_p}\}\) is not fully contained in either \(H_1\) or \(H_2\), then we cannot reach the identity matrix. Otherwise, there are two possibilities: (1) there exists some \(\Lambda_{\bm{s}_j} \in \GR\) such that the angle of \(\Lambda_{\bm{s}_j}\) is equal to \(\gamma\) (in which case such a \(\Lambda_{\bm{s}_j}\) lies on the line defined by \(\exp(\iu\gamma)\)), or (2) there exist \(\Lambda_{\bm{s}_i}, \Lambda_{\bm{s}_j}\), for \(1 \leq i < j \leq p\), such that \(\Lambda_{\bm{s}_i}\) and \(\Lambda_{\bm{s}_j}\) lie in different open half-planes, say \(\Lambda_{\bm{s}_i} \in H_1\) and \(\Lambda_{\bm{s}_j} \in H_2\).
Note that in the second case, there exist \(x, y \in \mathbb{N}\) such that\(x\Lambda_{\bm{s}_i} + y\Lambda_{\bm{s}_j} = r\exp(\iu\gamma)\) for some \(r \in \mathbb{R}\) since \(\Lambda_{\bm{s}_i}, \Lambda_{\bm{s}_j}\) and the commutators that define the \(\gamma\)-line have rational components. It means that in both cases there exist \(z_1, \ldots, z_p \in \mathbb{N}\) such that
\begin{equation}\label{eq:gline}
    \sum_{j=1}^{p}z_j\Lambda_{\bm{s}_j} = r\exp(\iu\gamma) \quad\text{for some}\quad r \in \mathbb{R}.
\end{equation}

Consider a tuple \(z_1, \ldots, z_p \in \mathbb{N}\) that satisfies Equation~(\ref{eq:gline}) and a product \(T = T_1 \cdots T_k\in \Omega\), such that each \(T_j\in G\) and whose Parikh vector is equal to \(\sum_{j=1}^{p}z_j\bm{s}_j\). It follows from Equation~\eqref{eq:X1n} that
\[
T_{1,n} = \sum_{j=1}^{p}z_j\Lambda_{\bm{s}_j} + r'\exp(\iu\gamma) = r \exp(\iu\gamma) + r'\exp(\iu\gamma),
\]
where \(r, r' \in \mathbb{R}\) and shuffles of such a product change only \(r'\).
Now, we have two possibilities.
\begin{itemize}
\itemindent=\leftmargin
    \item[\textbf{Case 1:}] There is a product \(T\), as defined above, that contains a pair of non-commuting matrices.
    
    \item[\textbf{Case 2:}] Any such product \(T = T_1 \cdots T_k\) consists only of commuting matrices from \(G\).
\end{itemize}

We consider each case separately, but firstly we show how to decide in \emph{polynomial time} which of them holds, that is, whether the \(\gamma\)-line can be reached by a pair of non-commuting matrices from \(G\).
To do this, we need to decide if there exist a shuffle invariant that lies on the \(\gamma\)-line and which uses a pair on non-commuting matrices. This may be determined by deciding the solvability of a polynomially sized set of Non-homogeneous Systems of Linear Diophantine Equations (NSLDEs), as explained below. We will show that determining if such a NSLDE has a solution can be done in polynomial time, and therefore the above property can be decided in \(\P\). We now outline the process.

Let \(v \in \GR\) be a complex number on the \(\gamma\)-line with rational components. Such \(v\) can be computed in polynomial time, e.g., it can be any non-zero commutator. We define its \emph{vectorization} as \(\vect(v) = (\Re(v), \Im(v))\), i.e., splitting into real and imaginary components.
The value \(v^{\perp} \in \GR\) whose vectorization \(\vect(v^{\perp})\) is perpendicular to \(\vect(v)\) in the complex plane is defined as \(v^{\perp} = \exp(\frac{i\pi}{2})v\) noting that \(\vect(v^{\perp})\) has rational components such that \(\vect(v^{\perp}) = (-\Im(v), \Re(v))\). Now, if all shuffle invariants are \emph{not} contained within \(H_1\) (or else \(H_2\)) then there exists two shuffle invariants, say \(\Lambda_{\bm{s}_1}, \Lambda_{\bm{s}_2} \in \GR\) such that \(\vect(\Lambda_{\bm{s}_1})\cdot \vect(v^{\perp}) > 0\) and \(\vect(\Lambda_{\bm{s}_2})\cdot \vect(v^{\perp}) < 0\), where~\(\cdot\) denotes the dot product, or else there exists a shuffle invariant \(\Lambda_{\bm{s}_3} \in \GR\) such that \(\vect(\Lambda_{\bm{s}_1})\cdot \vect(v^{\perp}) = 0\), thus \(\Lambda_{\bm{s}_3}\) is on the \(\gamma\)-line. 

Recall that \(G = \{G_1,\ldots, G_t\}\) and define \(\bm{y} = (c_1-\frac{1}{2}\bm{a}_1^T\bm{b}_1, \ldots, c_t-\frac{1}{2}\bm{a}_t^T\bm{b}_t)^T \in \GR^t\). Consider the vector \(\bm{z} = \Re(v^\perp)\Re(\bm{y}) + \Im(v^\perp)\Im(\bm{y}) \in \mathbb{Q}^t\). Now, let \(\bm{x}\) be the Parikh vector of some product of matrices from \(G\) that gives an \(\Omega\)-matrix. Then for  \(\bm{x}\) and its shuffle invariant \(\Lambda_{\bx} \in \GR\), we have that \(\bm{z}^T\bm{x} = \vect(\Lambda_{\bm{x}})\cdot \vect(v^{\perp})\).

We will now derive a set of NSLDEs, to allow us to determine if it is possible to find a product of matrices, not all of which commute, whose shuffle invariant lies on the \(\gamma\)-line. We require a non-homogeneous system in order to enforce that the solution is non-commuting.

We thus consider the following system of linear Diophantine equations \(A\bx\geq\bm{b}\), which we will now define. We may assume that not all matrices in generator set \(G = \{G_1, \ldots, G_t\}\) commute, since we will deal with this subcase later. Let us therefore consider a pair of non-commuting matrices \(G_i, G_j \in G\). In the construction of our NSLDE, submatrices \(A_1\) and \(-A_1\) are used to enforce that \(\bx\) is the Parikh vector of an \(\Omega\)-matrix \(M_{\bx} = M_1 M_2 \cdots M_k\), submatrices \(A_2, -A_2\) are used to determine that the shuffle invariant \(\Lambda_{\bx}\) is on the \(\gamma\)-line, and \(A_{i,j}\) is used to ensure that there exists two non-commuting matrices in the product \(M_1 M_2 \cdots M_k\), namely matrices \(G_i\) and \(G_j\). We will formulate such an NSLDE for each pair of non-commuting matrices.
Let us define \(T_{i,j}\bx\geq\bm{b}\) for every pair of non-commuting matrices \(G_i, G_j\) as follows:
\begin{equation}\label{eq:Tij}
T_{i,j} = \begin{pmatrix} A_1 \\ -A_1 \\ A_2 \\ -A_2 \\ A_{i,j} \end{pmatrix},
A_1 =\begin{pmatrix}
\Re(\veca_1) & \Re(\veca_2) & \cdots & \Re(\veca_t) \\
\Im(\veca_1) & \Im(\veca_2) & \cdots & \Im(\veca_t) \\
\Re(\vecb_1) & \Re(\vecb_2) & \cdots & \Re(\vecb_t) \\
\Im(\vecb_1) & \Im(\vecb_2) & \cdots & \Im(\vecb_t) \\
\end{pmatrix},
A_2 =\begin{pmatrix}
\bm{z}(1) \\ \bm{z}(2) \\ \vdots \\ \bm{z}(t)
\end{pmatrix}^T,
A_{i,j} =\begin{pmatrix}
\bm{e}_i^T \\ \bm{e}_j^T
\end{pmatrix}, 
\end{equation}
where \(\bx\in\N^t\), \(\bm{b} = (\bm{0}^{4(n-2)}, \bm{0}^{4(n-2)}, 0, 0, 1, 1)^T\); noting that \(T_{i,j} \in \mathbb{Q}^{(8(n-2)+4) \times t}\). Here, \(\bm{e}_i \in \{0, 1\}^t\) denotes the \(i\)'th standard (column) basis vector, i.e., the all zero vector except \(\bm{e}_i(i) = 1\), and similarly for \(\bm{e}_j\). A solution \(\bm{x}\in\N^t\) to this NSLDE implies that \(A_1 \bx \geq \bm{0}^{4(n-2)}\) and \(-A_1 \bx \geq \bm{0}^{4(n-2)}\), thus \(A_1 \bx = \bm{0}^{4(n-2)}\), and therefore \(G_1^{\bx(1)} \cdots G_t^{\bx(t)} \in \Omega\), which was the first property that we wished to enforce. Secondly, we see that \(A_2\bx \geq 0\) and  \(-A_2\bx \geq 0\) implies that \(\Lambda_{\bm{x}}\) is orthogonal to \(v^{\perp}\), i.e., the corresponding shuffle invariant is on the \(\gamma\)-line. Finally, \(A_{i,j}\bx \geq (1,1)^T\) implies that matrix \(G_i\) was used at least once, and matrix \(G_j\) was used at least once. Therefore, if the above system has a solution, then there is a product containing non-commuting matrices \(G_i, G_j\) that gives an \(\Omega\)-matrix, and whose shuffle invariant lies on the \(\gamma\)-line.

Note that the vector \(\bm{b}\) has non-negative coordinates, namely, \(0\)s and \(1\)s. Therefore, by Lemma~\ref{Ptime}, we can determine in polynomial time whether the system \(T_{i,j}\bx\geq\bm{b}\) has an integer solution.
There are \(t(t-1)/2\) many pairs of \(i,j\) that we need to check. Hence we can determine in polynomial time if the \(\gamma\)-line can be reached by using a pair of non-commuting matrices.

\medskip
Now we are ready to consider the \textbf{Cases 1} and \textbf{2} defined above.
\medskip

\textbf{Case 1.} We will show that in this case the identity matrix belongs to \(\langle G \rangle\).
Indeed, consider a shuffle \(T' = N_1N_2X' \in \shuffle(T_1, \ldots, T_k)\), where \(N_1\in G\) and \(N_2\in G\) do not commute and \(X'\) is the product of the remaining matrices in any order. We observe that Lemma~\ref{lem:purewithcommutators} implies 
\begin{align*}
(N_1^{\ell_1}N_2^{\ell_1}X'^{\ell_1})_{1,n} &= \ell_1 r\exp(\iu\gamma) + \frac{\ell_1^2}{2} [N_1, N_2]= \ell_1 r\exp(\iu\gamma) + \frac{\ell_1^2}{2} r'\exp(\iu\gamma)   
\quad \text{and}\\
(N_2^{\ell_2}N_1^{\ell_2}X'^{\ell_2})_{1,n} &= \ell_2 r\exp(\iu\gamma) + \frac{\ell_2^2}{2} [N_2, N_1]= \ell_2 r\exp(\iu\gamma) - \frac{\ell_2^2}{2} r'\exp(\iu\gamma),
\end{align*}
for some \(0\neq r' \in \mathbb{R}\). We then notice that 
\begin{multline*}
\left((N_1^{\ell_1}N_2^{\ell_1}X'^{\ell_1})^{d_1}(N_2^{\ell_2}N_1^{\ell_2}X'^{\ell_2})^{d_2}\right)_{1,n} 
=  \\
d_1\left(\ell_1r\exp(\iu\gamma) + \frac{\ell_1^2}{2}r'\exp(\iu\gamma)\right) + d_2\left(\ell_2r\exp(\iu\gamma) - \frac{\ell_2^2}{2}r'\exp(\iu\gamma)\right).
\end{multline*}
Now,
\begin{align*}
d_1\left(\ell_1r\exp(\iu\gamma) + \frac{\ell_1^2}{2}r'\exp(\iu\gamma)\right) + d_2\left(\ell_2r\exp(\iu\gamma) - \frac{\ell_2^2}{2}r'\exp(\iu\gamma)\right) &= 0 \\
\Longleftrightarrow d_1(2\ell_1r + \ell_1^2r') + d_2(2\ell_2r - \ell_2^2r') &= 0
\\
\Longleftrightarrow\ d_1(2\frac{r}{r'}\ell_1 + \ell_1^2) + d_2(2\frac{r}{r'}\ell_2 - \ell_2^2) &= 0.
\end{align*}
By our assumption, the vectors \(r\exp(\iu\gamma)\) and \(r'\exp(\iu\gamma)\) have rational coordinates and the same angle~\(\gamma\). It follows that \(\frac{r}{r'}\in \mathbb{Q}\). 
Hence we may choose sufficiently large \(\ell_1, \ell_2 > 1\) such that \(2\frac{r}{r'}\ell_1 + \ell_1^2\) and \(2\frac{r}{r'}\ell_2 - \ell_2^2\) have different signs, and then integers \(d_1, d_2 > 1\) can be chosen that satisfy the above equation. This choice of \(\ell_1, \ell_2, d_1, d_2\) is then such that
\(
(N_1^{\ell_1}N_2^{\ell_1}X'^{\ell_1})^{d_1}(N_2^{\ell_2}N_1^{\ell_2}X'^{\ell_2})^{d_2} = \bm{I}
\)
as required. Thus if such non-commuting matrices are present, we can reach the identity.

\textbf{Case 2.} Finally, we consider the case when all matrices which can be used to reach the $\gamma$-line commute. 
Since we also assumed at the beginning of the proof that \(\{\Lambda_{\bm{s}_1}, \ldots, \Lambda_{\bm{s}_p}\}\) is not contained in \(H_1\) or \(H_2\), we have two cases:
\begin{itemize}
\itemindent=\leftmargin
    \item[\textbf{Case A:}] There exist \(\Lambda_{\bm{s}_i}, \Lambda_{\bm{s}_j}\) such that \(\Lambda_{\bm{s}_i}\in H_1\) and \(\Lambda_{\bm{s}_j}\in H_2\), or
    
    \item[\textbf{Case B:}] \(\{\Lambda_{\bm{s}_1}, \ldots, \Lambda_{\bm{s}_p}\}\subseteq H_1 \cup \gamma\text{-line}\).

    (Actually, there is a third possibility that \(\{\Lambda_{\bm{s}_1}, \ldots, \Lambda_{\bm{s}_p}\}\subseteq H_2 \cup \gamma\text{-line}\) but it is similar to this case by symmetry.)
\end{itemize}

Note that we can determine in polynomial time which of these cases hold. Indeed, the first case holds if and only if there exist vectors \(\bm{x}, \bm{y}\in\N^t\) such that \(A_1 \bm{x} = \bm{0}^{4(n-2)}\), \(A_1 \bm{y} = \bm{0}^{4(n-2)}\), \(A_2\bm{x} > 0\), and \(A_2\bm{y} < 0\). To see this, take \(\bm{x}\in\N^t\) such that \(A_1 \bm{x} = \bm{0}^{4(n-2)}\) and \(A_2\bm{x} > 0\). Then \(\bm{x}\) can be written as \(\bm{x} = \sum_{j=1}^{p}z_j\bm{s}_j\) for some tuple \((z_1,\ldots,z_p)\in \N^p\). So we have \(A_2\bm{x} = \sum_{j=1}^{p}z_jA_2\bm{s}_j > 0\), which implies that \(A_2\bm{s}_i > 0\) for some \(i\) since all \(z_j\) are non-negative. Hence \(\Lambda_{\bm{s}_i}\in H_1\). All the other cases can be considered in a similar way, and
the existence of a solution to such system can be decided in polynomial time by the second part of Lemma~\ref{Ptime}.

Next, we show the following claim.
\begin{claim}
   In both cases, the set \(C = \{G_1,\ldots,G_{t'}\} \subseteq G\) of commuting matrices that can be used to reach the \(\gamma\)-line is computable in polynomial time. 
\end{claim}

\begin{proof}
Firstly, we show that in \textbf{Case A}, all matrices from \(G\) can be used in some product that is equal to an \(\Omega\)-matrix whose top-right entry lies on the \(\gamma\)-line.
Indeed, since we assume that each matrix \(M_k\) from \(G\) is non-redundant, there is some \(\bm{s}_i\) with non-zero \(k\)th coordinate. Now, the shuffle invariant \(\Lambda_{\bm{s}_i}\) can be paired with some \(\Lambda_{\bm{s}_j}\) from the other open half-plane to reach the \(\gamma\)-line.
Namely, there exist \(x, y \in \mathbb{N}\) such that \(x\Lambda_{\bm{s}_i} + y\Lambda_{\bm{s}_j} = r\exp(\iu\gamma)\) for some \(r \in \mathbb{R}\). Hence we can find a tuple \((z_1,\ldots,z_p)\in \N^p\) such that \(\sum_{j=1}^{p}z_j\Lambda_{\bm{s}_j} = r\exp(\iu\gamma)\), for some \(r \in \mathbb{R}\), and the vector \(\sum_{j=1}^{p}z_j\bm{s}_j\) has only non-zero coordinates. This gives us a product with Parikh vector \(\sum_{j=1}^{p}z_j\bm{s}_j\) that uses all matrices from \(G\) and reaches the \(\gamma\)-line.

On the other hand, in \textbf{Case B}, a matrix \(G_k\in G\) can be used in some product that reaches the \(\gamma\)-line if and only if the \(k\)th coordinate of some \(\bm{s}_i\), for which \(\Lambda_{\bm{s}_i}\) lies on the \(\gamma\)-line, is non-zero. In other words, precisely the following matrices can be used to reach the \(\gamma\)-line:
\begin{equation}\label{eq:gammaline}
    \{G_k \in G\ :\ \text{there is some } \bm{s}_i \text{ such that } \bm{s}_i(k)>0 \text{ and } \Lambda_{\bm{s}_i} = r\exp(\iu\gamma) \text{ for some } r \in \mathbb{R}\}.
\end{equation}

To show this, assume \(G_k\) belongs to a product that reaches the \(\gamma\)-line and let \(\bm{x}\in \N^t\) be its Parikh vector. We can write \(\bm{x} = \sum_{j=1}^{p}z_j\bm{s}_j\), for some \((z_1,\ldots,z_p)\in \N^p\). Note that \(\Lambda_{\bm{x}} = \sum_{j=1}^{p}z_j\Lambda_{\bm{s}_j} = r\exp(\iu\gamma)\) for some \(r \in \mathbb{R}\). Since we assumed \(\{\Lambda_{\bm{s}_1}, \ldots, \Lambda_{\bm{s}_p}\}\subseteq H_1 \cup \gamma\text{-line}\), it follows that \(z_j=0\) if \(\Lambda_{\bm{s}_j}\in H_1\). This means that only those \(\Lambda_{\bm{s}_j}\) that lie on the \(\gamma\)-line can appear in the linear combination \(\Lambda_{\bm{x}} = \sum_{j=1}^{p}z_j\Lambda_{\bm{s}_j}\). Since \(G_k\) appears in the product, we have \(\bm{x}(k) = \sum_{j=1}^{p}z_j\bm{s}_j(k) > 0\). Thus \(z_j\bm{s}_j(k) > 0\) for some \(j\). In particular, \(\bm{s}_j(k) > 0\) and \(z_j>0\), which implies that \(\Lambda_{\bm{s}_j}\) is on the \(\gamma\)-line. Therefore, \(G_k\) belongs to the set (\ref{eq:gammaline}).

Conversely, consider all matrices \(G_k \in G\) for which there is some \(\bm{s}_{i_k}\) with the property that \(\bm{s}_{i_k}(k)>0\) and \(\Lambda_{\bm{s}_{i_k}}\) lies on the \(\gamma\)-line. Let \(\bm{x}\) be the sum of these \(\bm{s}_{i_k}\). In this case, \(\Lambda_{\bm{x}}\) is on the \(\gamma\)-line, and we can construct a product with Parikh vector \(\bm{x}\) that reaches the \(\gamma\)-line and contains all matrices from the set (\ref{eq:gammaline}), namely, it suffices to take any product with Parikh vector \(\bm{x}\).

Note that the condition ``there is some \(\bm{s}_i\) such that \(\bm{s}_i(k)>0\) and \(\Lambda_{\bm{s}_i}\) is on the \(\gamma\)-line'' is equivalent to the following: there is some \(\bm{x}\in \N^t\) with \(\bm{x}(k)>0\) such that \(A_1\bm{x} = \bm{0}^{4(n-2)}\) and \(\Lambda_{\bm{x}}\) is on the \(\gamma\)-line. The implication in one direction is obvious. Suppose there is \(\bm{x}\in \N^t\) with \(\bm{x}(k)>0\) such that \(A_1\bm{x} = \bm{0}^{4(n-2)}\) and \(\Lambda_{\bm{x}}\) is on the \(\gamma\)-line. Then we can write \(\bm{x}\) as a sum \(\bm{x} = \sum_{j=1}^{p}z_j\bm{s}_j\), where \((z_1,\ldots,z_p)\in \N^p\). Since \(\bm{x}(k)>0\), there is \(j\) such that \(\bm{s}_j(k) > 0\). Also note that \(\Lambda_{\bm{s}_i}\) must be on the \(\gamma\)-line,  otherwise \(\Lambda_{\bm{x}}\) would not be on the \(\gamma\)-line because \(z_j>0\). Hence we can determine which matrices belong to (\ref{eq:gammaline}) by deciding if there is a solution to a system on linear equation similar to (\ref{eq:Tij}), with the exception that matrix \(A_{i,j}\) should be replaced with \(A_k = (\bm{e}_k^T)\) and vector \(\bm{b}\) has the form \(\bm{b} = (\bm{0}^{4(n-2)}, \bm{0}^{4(n-2)}, 0, 0, 1)^T\). This can be done in polynomial time by Lemma~\ref{Ptime}.
\end{proof}

So, in both cases, the set \(C = \{G_1,\ldots,G_{t'}\} \subseteq G\) of commuting matrices that can be used to reach the \(\gamma\)-line is computable in polynomial time.
To finish the proof of \textbf{Case 2}, note that by \eqref{commutingURval}, the top-right corner \(M_{1,n}\) of any \(M\in \langle C\rangle\cap\Omega\) can be expressed using the Parikh vector of the generators from \(C\).
This allows us to construct a new homogeneous system of linear Diophantine equations.
Let \(A\in \Q^{4(n-2)\times t'}\) be defined as in Equation \eqref{eq:matA} using only matrices present in \(C\), and let \(B=(c_1-\frac{1}{2}\veca_1^T\vecb_1,\ldots,c_{t'}-\frac{1}{2}\veca_{t'}^T\vecb_{t'})\).
We then construct a system
\(\begin{psmallmatrix}
A \\ B
\end{psmallmatrix}\bx=\bm{0}\),
where \(\bx\in \N^{t'}\) and \(\bm{0}\) is the \(t'\)-dimensional zero vector.
It is straightforward to see that if this system has a solution \(\bm{x}\), then 
\[G_1^{x(1)}G_2^{x(2)}\cdots G_{t'}^{x(t')}=\bm{I}.\]
By Lemma~\ref{Ptime} (see also \cite{Khachiyan80}), we can decide if such a system has a non-zero solution in polynomial time. 
\end{proof}

Lemmata~\ref{lem:notallsamegamma} and \ref{lem:samegamma} allow us to prove the main result, Theorem~\ref{thm:main}.
\begin{proofof}{Theorem~\ref{thm:main}}
Let \(G=\{G_1,\ldots,G_t\}\).
The first step is to remove all redundant matrices from~\(G\). Recall that we can check if a matrix is non-redundant by deciding if a system of non-homogeneous linear Diophantine equations, giving an \(\Omega\)-matrix, has a solution. This was explained at the beginning of Section~\ref{SecShufInv}.
It is decidable in polynomial time if there exists a solution to such a system.

It is obvious that if there is no solution to the system, then the identity matrix is not in the generated semigroup.
Indeed, there is no way to generate a matrix with zeroes in the \(\bm{a}\) and \(\bm{b}\) elements and thus \(\bm{I}\not\in\langle G\rangle\).

If there is at least one solution, then we calculate commutators \([G_i,G_j]\) for every pair \(G_i,G_j\) of non-redundant matrices from \(G\).
If there are at least two commutators with different angles, the identity matrix is in the semigroup by Lemma~\ref{lem:notallsamegamma}.
If all commutators have the same angle, we apply Lemma~\ref{lem:samegamma} to decide in polynomial time whether the identity matrix is in the semigroup.

Note that there are \(O(t^2)\) commutators to calculate. Hence the whole procedure runs in polynomial time. 
\end{proofof}

The decidability of the Identity Problem implies that the Subgroup Problem is also decidable.
That is, whether the semigroup generated by the generators \(G\) contains a non-trivial subgroup.
However, the decidability of the Group Problem, i.e., whether \(\langle G\rangle\) is a group, does not immediately follow.
Our result can be extended to show decidability of the Group Problem.
\begin{corollary}\label{cor:group}
It is decidable in polynomial time whether a finite set of matrices \(G\subseteq\heisc\) forms a group.
\end{corollary}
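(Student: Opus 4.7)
The key observation is that $\langle G\rangle$ is a group if and only if there exists a non-empty product of matrices from $G$ that equals $\bm{I}$ and uses every generator $G_j$ at least once. For the ``if'' direction, suppose $N_1 \cdots N_{m-1} G_j N_{m+1} \cdots N_\ell = \bm{I}$ with $N_m = G_j$; inverting in the ambient group $\heisc$ yields $G_j^{-1} = N_{m+1} \cdots N_\ell N_1 \cdots N_{m-1}$, a product of generators lying in $\langle G\rangle$ (or $G_j = \bm{I}$ in the degenerate case $\ell=1$), so $\langle G\rangle$ is closed under inverses and is therefore a group. For the ``only if'' direction, since each $G_j^{-1}\in \langle G\rangle$, one writes $G_j^{-1} = P_j$ as a product of generators and forms the concatenation $(G_1 P_1)(G_2 P_2)\cdots(G_t P_t)$, which equals $\bm{I}$ and uses every generator. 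In this way, the Group Problem reduces to a restricted form of the Identity Problem.

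The plan is then to rerun the algorithm of Theorem~\ref{thm:main} while augmenting every system of linear Diophantine equations with the additional constraints $\bm{x}(j) \geq 1$ for all $j \in \{1, \ldots, t\}$; by Lemma~\ref{Ptime}, such augmented systems remain decidable in polynomial time because the right-hand side vectors still have non-negative coordinates. First, I would check that every generator is non-redundant: if some $G_j$ fails to appear in any $\Omega$-matrix product, it cannot appear in a product equal to $\bm{I}$ either, and the algorithm returns ``not a group''. Assuming all $G_j$ are non-redundant, compute all pairwise commutators. If two of them have distinct angles, we are in the setting of Lemma~\ref{lem:notallsamegamma}, and the base product $M_1 M_2 M_3 M_4 X \in \Omega$ used there can always be chosen to include every generator, by summing the Parikh vectors of the individual $\Omega$-products witnessing non-redundancy of each $G_j$ and then rearranging the resulting $\Omega$-product to place the four selected matrices at the start. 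The construction of the identity in Lemma~\ref{lem:notallsamegamma} then automatically uses every generator, and the answer is ``is a group''.

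When all pairwise commutators share a common angle $\gamma$, I would follow the case analysis of Lemma~\ref{lem:samegamma} with the augmented constraints $\bm{x}(j) \geq 1$. The main obstacle is the sub-case in which the $\gamma$-line can be reached only through commuting matrices: if some non-commuting generator would nevertheless have to appear in the product, then the augmented non-homogeneous system has no solution, correctly signalling ``not a group''; otherwise the final homogeneous system of Lemma~\ref{lem:samegamma}, with the added $\bm{x}(j) \geq 1$ constraints, decides whether the identity is reachable. In the complementary sub-case, where a pair of non-commuting matrices together with the other generators reaches the $\gamma$-line, the identity construction of Lemma~\ref{lem:samegamma} adapts unchanged once the augmented NSLDE is feasible. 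Since the procedure constructs and solves only polynomially many augmented systems of linear Diophantine equations, each in polynomial time by Lemma~\ref{Ptime}, the whole algorithm runs in polynomial time.
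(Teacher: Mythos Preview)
Your proof is correct and follows essentially the same approach as the paper: both reduce the Group Problem to finding an identity-producing product that uses every generator, check non-redundancy first, handle the distinct-commutator-angle case via Lemma~\ref{lem:notallsamegamma} with a base product containing all generators, and in the common-angle case augment the linear systems of Lemma~\ref{lem:samegamma} with the constraints \(\bm{x}(j)\geq 1\). The paper phrases the common-angle case slightly differently---it explicitly computes the set \(C\subseteq G\) of generators that can appear in a \(\gamma\)-line-reaching product and checks whether \(C=G\)---but this is equivalent to your uniform augmentation, since summing individual witnessing Parikh vectors shows that the augmented system \(A_1\bm{x}=\bm{0},\ A_2\bm{x}=0,\ \bm{x}(k)\geq 1\) for all \(k\) is feasible precisely when \(C=G\).
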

\begin{proof}
In order for \(\langle G\rangle\) to be a group, each element of \(G\) must have a multiplicative inverse. If there exist any redundant matrices in \(G\), then \(\langle G \rangle\) is not a group, since a redundant matrix cannot be part of a product giving even an \(\Omega\)-matrix. Checking if a matrix is redundant can be done in polynomial time (see Section~\ref{SecShufInv}).

Assuming then that all matrices are non-redundant, if there exist matrices \(M_1, M_2, M_3, M_4 \in G\) such that \([M_{1},M_{2}]\not\gammaeq[M_{3},M_{4}]\), then \(\bm{I} \in \langle G \rangle\) by Lemma~\ref{lem:notallsamegamma}, and in fact we can find a product of matrices equal to the identity matrix which contains each matrix from \(G\) (since all matrices are non-redundant, thus \(M_1\cdots M_k \in \Omega\) may be chosen to contain all matrices). Thus each matrix of \(G\) has a multiplicative inverse as required. 

Next, assume that all matrices in \(G\) are non-redundant and share a common commutator angle. As explained in the proof of Lemma~\ref{lem:samegamma}, we can compute in polynomial time a subset \(C\subseteq G\) of matrices that can be used in a product which is equal to an \(\Omega\)-matrix whose top-right element lies on the \(\gamma\)-line. Clearly, if \(C\neq G\), then \(\langle G \rangle\) is not a group. If \(C=G\) and \(G\) contains a pair of non-commuting matrices, then in the proof of Lemma~\ref{lem:samegamma} we can choose a product \(T = T_1 \cdots T_k\in \Omega\) in such a way that it includes all matrices from \(G\). Using the same idea we can construct a product that gives \(\bm{I}\) and uses all matrices from \(G\). Hence \(\langle G \rangle\) is a group.

Finally, we need to consider the case when \(C=G\) contains only commuting matrices. We can decide if there is a product that is equal to \(\bm{I}\) and uses all matrices from \(G\) by solving a homogeneous system of linear Diophantine equations.
Let \(A\in \Q^{4(n-2)\times t}\) be defined as in Section~\ref{SecShufInv}, and let \(A_2=(c_1-\frac{1}{2}\veca_1^T\vecb_1,\ldots,c_{t}-\frac{1}{2}\veca_{t}^T\vecb_{t})\).
Define system
\(\begin{psmallmatrix}
A \\ A_2
\end{psmallmatrix}\bx=\bm{0}^{4(n-2)+1}\),
where \(\bx\in \N^{t}\) and \(\bm{0}^{4(n-2)+1} = (0, \ldots, 0) \in \mathbb{N}^{4(n-2)+1}\). Solvability of this system implies that the identity matrix can be reached. Unfortunately, this does not guarantee that all matrices from \(G\) were used in such a product reaching \(\bm{I}\). We can however form the following non-homogeneous system of linear Diophantine equations  
\[
\begin{psmallmatrix}
A \\ -A \\ A_2 \\ -A_2 \\ I_t
\end{psmallmatrix}\bx\geq
\begin{psmallmatrix}
\bm{0}^{4(n-2)} \\ \bm{0}^{4(n-2)} \\ 0 \\ 0 \\ \bm{1}^t
\end{psmallmatrix},
\]
where \(\bm{1}^t = (1, \ldots, 1) \in \mathbb{N}^t\) and \(\bm{I}_t\) is the \(t \times t\) identity matrix. Solvability of this system is equivalent to the homogeneous system defined above, with the added constraint that \(\bm{I}_t \bx \geq \bm{1}^t\) implying that all matrices must be used at least once in such a solution, as required for \(\langle G \rangle\) to be a group. By Lemma~\ref{Ptime}, we can decide in polynomial time if the above system has a solution. 
\end{proof}

\section{Future research}
We believe that the techniques, and the general approach, presented in the previous sections can act as stepping stones for related problems.
In particular, consider the Membership Problem, i.e., where the target matrix can be any matrix rather than the identity matrix (Problem~\ref{prob:membership}).
Let \(M=\begin{psmallmatrix}
    1&\bm{m}^T_1&m_3 \\ \bm{0}&\bm{I}_{n-2}&\bm{m}_2\\0&\bm{0}^T&1
\end{psmallmatrix}\) be the target matrix and let \(G=\{G_1,\ldots,G_t\}\), where \(\psi(G_i)=(\bm{a}_i,\bm{b}_i,c_i)\).
Following the idea of Section~\ref{SecShufInv}, we can consider system \(A\bx=(\bm{m}_1,\bm{m}_2)\), where
\(\bx\in\N^t\).
This system is a non-homogeneous system of linear Diophantine equations that can be solved in \(\NP\).
The solution set is a union of two finite solution sets, \(S_0\) and \(S_1\).
The set \(S_0\) is the solutions to the corresponding homogeneous system that can be repeated any number of times as they add up to \(\bm{0}\) on the right-hand side.
The other set, \(S_1\), corresponds to reaching the vector \((\bm{m}_1,\bm{m}_2)\).
The matrices corresponding to the solutions in \(S_1\) have to be used exactly this number of times.

The techniques developed in Section~\ref{SecShufInv} allow us to manipulate matrices corresponding to solutions in \(S_0\) in order to obtain the desired value in the top right corner.
However, this is not enough as the main technique relies on repeated use of \(\Omega\)-matrices.
These can be interspersed with matrices corresponding to a solution in \(S_1\) affecting the top right corner in uncontrollable ways.

Consider then the Membership Problem for Heisenberg matrices of dimension two, that is for matrices of the form \(\begin{psmallmatrix}
    1&a\\0&1
\end{psmallmatrix}\), where \(a\in\C\).
These matrices commute.
This means that if we construct a non-homogeneous system of two linear Diophantine equations (one equation for the real components and another for the imaginary) similar to what we did in Section~\ref{SecShufInv}, we capture all solutions.
However, this problem is then \(\NP\)-complete as it is a variant of the \emph{integer programming} \cite{GJ79}.
Note that if the number of generators in the generating set is fixed, the problem is solvable in polynomial time by the result of \cite{BBC+96} as all matrices commute.

This motivates the following open problem:
\begin{problem}[Membership Problem for Complex Heisenberg Matrices]\label{prob:membershipHeis}
Let \(n\geq3\),  $S$ be a matrix semigroup generated by a finite set of
$n{\times}n$ matrices over \heisc and let \(M\in\heisc\).
Is $M$ in the semigroup, i.e., does $M\in S$ hold?
\end{problem}

\nocite{*}
\bibliographystyle{fundam}
\bibliography{references}


\end{document}